\documentclass{article}

\usepackage[final]{neurips_2019}

\usepackage[utf8]{inputenc}
\usepackage[T1]{fontenc}
\usepackage{hyperref}
\usepackage{url}
\usepackage{booktabs}
\usepackage{amsfonts}
\usepackage{nicefrac}
\usepackage{microtype}
\usepackage{enumitem}
\usepackage{graphicx}
\usepackage{xcolor}
\usepackage{lipsum}
\usepackage{float}
\usepackage[square,sort,comma,numbers]{natbib}
\usepackage{subcaption}
\usepackage{mathrsfs}
\usepackage{amsmath,amsthm,verbatim,amssymb,amsfonts,amscd, graphicx, float}
\usepackage{thmtools}
\usepackage{thm-restate}
\usepackage{algorithm2e}

\usepackage{hyperref}
\usepackage[noabbrev]{cleveref}
\Crefname{subsection}{subsection}{subsections}

\newtheorem{theorem}{Theorem}[section]
\newtheorem{corollary}{Corollary}[theorem]
\newtheorem{lemma}[theorem]{Lemma}
\newtheorem{proposition}[theorem]{Proposition}

\theoremstyle{definition}
\newtheorem{definition}{Definition}[section]

\newcommand{\Lspace}{\textbf{L}}
\newcommand{\SLspace}{\textbf{SL}}
\newcommand{\RLspace}{\textbf{RL}}
\newcommand{\NL}{\textbf{NL}}


\newcommand{\STCONN}{\textsc{Stconn}}
\newcommand{\USTCONN}{\textsc{Ustconn}}

\newcommand{\Rot}{\text{Rot}}

\newcommand{\ceil}[1]{\left\lceil #1 \right\rceil}
\newcommand{\zigzag}{\mathbin{\raisebox{.2ex}{
      \hspace{-.4em}$\bigcirc$\hspace{-.75em}{\rm z}\hspace{.15em}}}}
\newcommand{\deranproduct}{\mathbin{\raisebox{.2ex}{
      \hspace{-.4em}$\bigcirc$\hspace{-.75em}{\rm s}\hspace{.15em}}}}

\title{
On the Problem of Undirected st-connectivity\\
}

\author{
  Shilun Li \\
  Dept. of Computer Science \\
  Stanford University \\
  \texttt{shilun@stanford.edu} \\
  \And
  Alex Lee \\
  Dept. of Computer Science \\
  Stanford University \\
  \texttt{leealex@stanford.edu}\\
}

\begin{document}
\maketitle
\begin{abstract}

In this paper, we discuss an algorithm for the problem of undirected st-connectivity that is deterministic and log-space, namely that of Reingold within his 2008 paper "Undirected Connectivity in Log-Space" \cite{reingold2008undirected}. We further present a separate proof by Rozenman and Vadhan of $\USTCONN \in \Lspace$ \cite{rozenman2005derandomized} and discuss its similarity with Reingold's proof. Undirected st-connectively is known to be complete for the complexity class $\SLspace$--problems solvable by symmetric, non-deterministic, log-space algorithms. Likewise, by Aleliunas et. al. \cite{aleliunas1979randomized}, it is known that undirected st-connectivity is within the $\RLspace$ complexity class, problems solvable by randomized (probabilistic) Turing machines with one-sided error in logarithmic space and polynomial time. Finally, our paper also shows that undirected st-connectivity is within the $\Lspace$ complexity class, problems solvable by deterministic Turing machines in logarithmic space. Leading from this result, we shall explain why $\SLspace = \Lspace$ and discuss why is it believed that $\RLspace = \Lspace$.
\end{abstract}

\section{Introduction}
In this report, we shall prove that the problem $\USTCONN$, or st-connectivity on undirected graphs, can be solved with a log-space algorithm. We then explore the implications of such a result. 

We can define the problem of st-connectivity by considering a graph $G$ and two vertices $s$ and $t$ in $G$. The st-connectivity problem answers whether or not the two vertices $s$ and $t$ are connected with each other by a path in $G$. Similarly, the $\USTCONN$ problem decides the $\STCONN$ problem, but on a graph $G$ that is specified to be undirected. (In this case, $\USTCONN$ is a special case of $\STCONN$ and all algorithms that solve $\STCONN$ would be able to solve $\USTCONN$.) The problem of connectivity is one of the most fundamental problems within graph theory, and algorithms to solve $\STCONN$ and $\USTCONN$ has been used to construct more complex graph algorithms. Indeed, a solution to the $\STCONN$ or $\USTCONN$ problem within a certain complexity class would similarly imply a solution within the complexity class for a much larger body of computational problems. 

The time complexity of $\USTCONN$ has been well understood and solved: it is clear to see that the minimum time complexity must be linear (as the length of a path from $s$ to $t$ would be linear), and it is also clear that such a time complexity would be achievable by depth-first search (DFS) or breadth-first search (BFS). 

Most recent study of the $\USTCONN$ problem thus, revolves around its space complexity. It is clear that the space complexity of $\USTCONN$ must be at least $\log$ space, which is the space required to store any $O(n)$ sized variables within the problem. In 1970, Savistch provided a $\log^2{n}$ space complexity solution to the $\STCONN$ (and $\USTCONN$). A randomized algorithm of log-space complexity was also developed in 1979 by Aleliunas, Karp, Lipton, Lovasz, and Rackoff \cite{aleliunas1979randomized}. Following the result, work has been done on derandomizing the randomized algorithm in hopes of creating a deterministic algorithm with decreasing space complexity. In 1992, Nisan, Szemeredi and Wigderson presented an algorithm of $\log^{1.5}$ space \cite{nisan1992log1.5}. In 1999, Armoni, et, al showed that $\USTCONN$ can be solved by an algorithm in $\log^{\frac{4}{3}}$ space \cite{Armoni2000AnOS}. In 2005, Trifonov developed an algorithm of $\log{n} \log\log{n}$ space for $\USTCONN$ \cite{Trifonov2005logloglog}. Finally, in 2008, Omer Reingold presented a deterministic algorithm that solves $\USTCONN$ in log-space complexity \cite{reingold2008undirected}. 

We shall begin our paper by presenting the result of $\USTCONN \in L$ through Omer Reingold's method in his paper "Undirected Connectivity in Log-Space". \cite{reingold2008undirected}. More specifically, we shall begin by explaining expander graphs and some transformations used to convert any graph to an expander graph. We shall then show that these transformations can be performed in log space and that connectivity can be computed from these expander graphs in log space too. This would prove that $\USTCONN   \in \Lspace$.

We would then present a separate proof from Rozenman and Vadhan of $\USTCONN \in \Lspace$ and discuss its similarity to Reingold's proof.

Finally, we shall explore the implications of this result on the relations between the complexity classes of $\Lspace$, $\SLspace$, and $\RLspace$. $\Lspace$ refers to problems solvable by a deterministic log space Turing machine, $\SLspace$ refers to problems solvable by symmetric log space Turing machines, and $\RLspace$ refers to problems solvable by probabilistic log space Turing machines with one sided error.  These three complexity classes are closely tied to $\USTCONN$ and we shall show that $\SLspace=\Lspace$ and discuss why it is believed that $\RLspace = \Lspace$.

\section{Preliminaries}
In this section, we will introduce some properties of graphs using adjacency matrix representation, along with procedures such as graph powering.
\subsection{Graph Adjacency Matrix}
For any graph $G$, common representations include adjacency list, adjacency matrix, and incidence matrix. There exist log-space algorithms which transforms between the common representations, so the problem of $\USTCONN$ does not rely on the input graph representations. In this paper, we will use the adjacency list representation.

\begin{definition}
The \textbf{adjacency matrix} $A$ of a graph $G=(V,E)$ is the $|V|\times |V|$ matrix such that the entry $(u,v)$ of $A$ written $A_{u,v}$ is equal to the number of number of edges from vertex $u$ to vertex $v$ in $G$.
\end{definition}

We allow $G$ to contain self loops and parallel edges.
\begin{definition}
For a graph $G=(V,E)$ with adjacency matrix $A$. $G$ is \textbf{undirected} if $A$ is symmetric, where we have $A=A^T$. An undirected graph $G$ is \textbf{D-regular} if there are exactly $D$ edges incident to every vertex, equivalently, $\sum_{v\in V}A_{u,v}=D$ for all $u\in V$.
\end{definition}

For any undirected D-regular graph $G$ with $N$ vertices, let us label each outgoing edge of every vertex of $G$ by a number from 1 to D in a fixed way. Then we define the rotation map of $G$ as follows:
\begin{definition}
For an undirected D-regular graph $G$ with $N$ vertices, let the \textbf{rotation map} $\Rot_G$ be a permutation of $[N]\times[D]$ defined by $\Rot_G(v,i)=(w,j)$ if edge $i$ from $v$ leads to $w$ and is the same edge as edge $j$ of $w$.
\end{definition}
The rotation map defines how the vertices and edges of $G$ are labeled. The rotation map will play a crucial role in transforming any undirected graph $G$ into a regular graph. The adjacency matrix can be expressed by the rotation map in the following way:
\begin{equation}
    A_{u,v}=|\{(i,j)\in[D]^2:\Rot_G(u,i)=(v,j)\}|
\end{equation}
To solve $\USTCONN$, we would like the graph to be highly connected but at the same time sparse so that the diameter is small. We call such highly connected sparse graphs expanders. We will define expanders using properties of its adjacency matrix.
\begin{proposition}
For an undirected D-regular graph $G$ with $N\times N$ adjacency matrix $A$, $A$ is diagonalizable with eigenvalues $\lambda_1\geq...\geq \lambda_N$. Furthermore, $\lambda_1=D$ and $\lambda_N\geq -D$.
\end{proposition}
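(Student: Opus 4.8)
The plan is to exploit the fact that $A$ is a real symmetric matrix, so that the spectral theorem applies directly. First I would invoke the spectral theorem for real symmetric matrices: since $G$ is undirected, $A = A^T$ is real symmetric, hence $A$ is orthogonally diagonalizable with all eigenvalues real. This immediately justifies writing the eigenvalues in decreasing order $\lambda_1 \geq \lambda_2 \geq \cdots \geq \lambda_N$, which is the first claim.

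Next I would identify $\lambda_1 = D$. The key observation is that the all-ones vector $\mathbf{1} = (1,1,\ldots,1)^T$ satisfies $A\mathbf{1} = D\mathbf{1}$, precisely because $\sum_{v} A_{u,v} = D$ for every $u$ by $D$-regularity; hence $D$ is an eigenvalue. To see it is the \emph{largest}, I would show $|\lambda| \leq D$ for every eigenvalue $\lambda$. The cleanest route is a Perron--Frobenius / Gershgorin-type argument: if $Ax = \lambda x$ with $x \neq 0$, pick a coordinate $u$ maximizing $|x_u|$; then $|\lambda| |x_u| = |\sum_v A_{u,v} x_v| \leq \sum_v A_{u,v} |x_v| \leq \left(\sum_v A_{u,v}\right) |x_u| = D |x_u|$, so $|\lambda| \leq D$. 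This bounds all eigenvalues in $[-D, D]$, giving both $\lambda_1 = D$ (it is attained and maximal) and $\lambda_N \geq -D$ simultaneously.

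I expect no serious obstacle here; the statement is essentially a packaging of the spectral theorem together with an elementary norm bound on the eigenvalues of a nonnegative $D$-regular matrix. The only point requiring a little care is that $A$ may have nonnegative integer entries larger than one (parallel edges) and nonzero diagonal entries (self loops), so I would make sure the Gershgorin-style estimate is phrased in terms of the row sums $\sum_v A_{u,v} = D$ rather than assuming a $0/1$ matrix. One could alternatively argue via $\|A\|_{\mathrm{op}} \leq D$ using the induced $\ell^\infty$ operator norm, which equals the maximum absolute row sum $= D$; since for symmetric $A$ the operator norm equals the spectral radius $\max_i |\lambda_i|$, this again yields $|\lambda_i| \leq D$ for all $i$. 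Either phrasing completes the proof.
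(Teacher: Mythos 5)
Your proposal is correct and follows essentially the same route as the paper: invoke the spectral theorem for the real symmetric matrix $A$, observe that the all-ones vector gives the eigenvalue $D$, and bound $|\lambda|\leq D$ by evaluating $Ax=\lambda x$ at a coordinate of maximal absolute value using the row sums $\sum_v A_{u,v}=D$. The extra care you note about parallel edges and self loops (arguing via row sums rather than a $0/1$ matrix) is exactly how the paper's estimate is phrased, so there is nothing to add.
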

\begin{proof}
The first part of the statement follows from spectral theorem for symmetric matrices. For the second part of the proposition, given any eigenvector $\lambda$ of $A$ with eigenvector $v=(v_1,...,v_n)$, consider the index $k$ such that $|v_k|$ achieves the maximum among all $v_1,...,v_i$. Now since $Av=\lambda v$, the $k^{th}$ component satisfies
$$|(Av)_k|=|\sum_{i=1}^n A_{k, i}v_i|\leq \sum_{i=1}^n |v_k|A_{k,i}=D|v_k|.$$
But we also have $|(Av)_k|=|\lambda||v_k|$. So $|\lambda|\leq D$ for any eigenvalue $\lambda$ of $A$.
Now note that for the vector $v=(1,...,1)^T$, we have $Av=Dv$. So $v$ is a eigenvector of $A$ with eigenvalue $D$. Thus $\lambda_1=D$ and $\lambda_N\geq -D$ as desired.
\end{proof}
Now, let us define the \textbf{normalized adjacency matrix} $M$ of an undirected D-regular graph $G$ as the adjacency matrix divided by $D$.
\begin{definition}
For any graph $G$, let $\lambda(G)$ be the second largest eigenvalue of the normalized adjacency matrix. $G$ is an \textbf{expander} graph if $\lambda(G)\leq \frac{1}{2}$.
$G$ is an \textbf{$(N,D,\lambda)$-graph} if it is undirected D-regular with $N$ vertices and $\lambda(G)<\lambda$.
\end{definition}
The second largest eigenvalue of $G$ captures its expansion properties. It is shown by Alon\cite{alon1986eigenvalues} that second-eigenvalue expansion is equivalent to the standard vertex expansion. In particular, we have the following 
\begin{proposition}
\label{expand length}
Fix any fixed $\lambda<1$, for any $(N,D,\lambda)$-graph $G_N$. For any two vertices $s,t\in G_G$, there exists a path of length $O(\log N)$.
\end{proposition}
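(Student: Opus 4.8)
The plan is to prove the stronger quantitative fact that the \emph{diameter} of any $(N,D,\lambda)$-graph is $O(\log N)$, with the hidden constant depending only on the fixed $\lambda<1$. The underlying intuition is that a random walk on an expander equidistributes after $O(\log N)$ steps, so after that many steps the walk started at any vertex puts strictly positive probability on every vertex; unwinding this produces a short walk, and hence a short simple path, between any prescribed $s$ and $t$.

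First I would reduce to a convenient spectral situation. The definition of an $(N,D,\lambda)$-graph only bounds the \emph{second largest} eigenvalue $\lambda(G)=\mu_2$ of the normalized adjacency matrix $M$, and says nothing about the most negative eigenvalue $\mu_N$, which can be as small as $-1$ (for instance when $G$ is bipartite). To neutralize this I would pass to the lazy graph $G'$ obtained by adding $D$ self-loops at every vertex: $G'$ is $2D$-regular and undirected, has exactly the same connectivity structure as $G$, and has normalized adjacency matrix $M'=\tfrac12(I+M)$, whose eigenvalues are $\tfrac{1+\mu_i}{2}\in[0,1]$. In particular every non-principal eigenvalue of $M'$ lies in $[0,\lambda']$ with $\lambda'=\tfrac{1+\lambda}{2}<1$, so $\max_{i\ge 2}|\mu_i(M')|\le\lambda'$; moreover $\mu_2(M')<1$ forces $G'$ (hence $G$) to be connected, which is consistent with the claim. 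It thus suffices to bound the diameter of $G'$.

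Next comes the spectral estimate. Writing $M'$ in an orthonormal eigenbasis $v_1=\tfrac{1}{\sqrt N}\mathbf 1,\,v_2,\dots,v_N$, one has $M'^{\,k}=\tfrac1N J+\sum_{i\ge 2}\mu_i(M')^k\,v_iv_i^{\mathsf T}$, where $J$ is the all-ones matrix, hence $\big\|M'^{\,k}-\tfrac1N J\big\|_{\mathrm{op}}\le(\lambda')^k$. Reading off the $(s,t)$ entry gives $\big(M'^{\,k}\big)_{s,t}\ge \tfrac1N-(\lambda')^k$. I would then pick $k=\ceil{\ln N/\ln(1/\lambda')}+1=O(\log N)$, which makes $(\lambda')^k<1/N$ and therefore $\big(M'^{\,k}\big)_{s,t}>0$. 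This means there is a walk of length exactly $k$ from $s$ to $t$ in $G'$; deleting cycles from it leaves a simple path of length at most $k$, and a simple path uses no self-loops, so it is a path in $G$ of length $O(\log N)$, which is exactly what is claimed.

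I do not expect a genuine obstacle here; the one point requiring care is the laziness reduction, since without it the statement can fail for odd-length paths on bipartite expanders, and one must check that adding self-loops changes neither connectivity nor the length of the extracted simple path. A more combinatorial alternative is to track the growth of the BFS balls around $s$ and $t$: by the cited equivalence of Alon, $\lambda(G)<1$ implies that every vertex set of size at most $N/2$ has a neighborhood larger by a constant factor, so each ball exceeds $N/2$ after $O(\log N)$ layers and the two balls must then intersect. This route, however, relies on the second-eigenvalue-to-vertex-expansion implication, which has not been developed in the text, whereas the spectral walk argument above uses only the eigenvalue bound already in hand.
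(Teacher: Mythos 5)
Your proposal is correct, but it follows a genuinely different route from the paper. The paper's proof is precisely the combinatorial alternative you relegate to your final paragraph: it cites Alon's equivalence between the second-eigenvalue bound and vertex/edge expansion to obtain an expansion constant $\epsilon$ for sets of size at most $N/2$, grows the BFS balls around $s$ and $t$ for $l=O(\log N)$ layers until each exceeds $N/2$ vertices, and concludes that the balls intersect, yielding a path of length $2l$. Your main argument is instead purely spectral: pass to the lazy graph $G'$ with normalized matrix $M'=\tfrac12(I+M)$, note that all non-principal eigenvalues lie in $[0,\lambda']$ with $\lambda'=\tfrac{1+\lambda}{2}<1$, deduce $(M'^{\,k})_{s,t}\geq \tfrac1N-(\lambda')^{k}>0$ for $k=O(\log N)$, and extract a simple path, which uses no self-loops and hence lives in $G$. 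The laziness step is a real necessity and you identify it correctly, since the paper's definition of an $(N,D,\lambda)$-graph bounds only the second largest eigenvalue, not the most negative one; the entrywise estimate from the operator-norm bound and the cycle-removal step are both sound. What your route buys: it is self-contained, relying only on the eigenvalue bound already in hand rather than the expansion equivalence cited without proof from Alon, and the implied constant visibly depends only on $\lambda$, whereas the paper's edge-expansion argument, as written, converts edge boundary into new vertices and thus arguably leaks a dependence on $D$ (harmless in the paper's applications, where $D$ is constant). What the paper's route buys is the standard expander ball-growth picture with no linear algebra beyond the cited expansion fact; both arguments in fact establish the stronger statement that the diameter is $O(\log N)$.
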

\begin{proof}
By the result of Alon\cite{alon1986eigenvalues}, for any $\lambda<1$, there exist $\epsilon>0$ such that for any $(N,D,\lambda)$-graph $G_N$ and any set $S$ of vertices of $G_N$ such that $|S|\geq \frac{N}{2}$, we have $|\partial S|\geq \epsilon |S|$ where $\partial S=\{(u,v)\in E(G_N): u\in S, v\in V(G_N)\setminus S\}$. Now for any two vertices $s,t\in G_N$, for some $l=O(\log N)$ with constant only depending on $\epsilon$, since the edge expansion factor is at least $\epsilon$, both $s$ and $t$ can have more than $\frac{N}{2}$ vertices of at most distance $l$. Then there exist a vertex $v$ within distance $l$ from both $s$ and $t$. Thus there exist a path of length $2l=O(\log N)$ from $s$ to $t$. 
\end{proof}
One may notice that the vertex expansion property of an undirected $(N,D,\lambda)$-graphs with $\lambda<1$ implies it is connected. We can also directly see this as for any undirected D-regular graph with multiple connected components, each component will contribute an orthogonal eigenvector of the normalized adjacency matrix with eigenvalue 1 via the indicator of that component. Thus the graph will have the second largest eigenvalue of normalized adjacency matrix being 1 if it has more than one connected component.  With the result above, we can now solve the undirected st-connectivity problem for constant-degree expanders using log-space.
\begin{lemma}
\label{algorithm regular}
For any fixed $\lambda<1$, there exist a space $O(\log D\log N)$ algorithm $\mathcal{A}_\lambda$ such that on an input of an undirected D-regular graph $G$ with $N$ vertices and two vertices $s,t\in G$:
\begin{itemize}
    \item Outputs "connected" only if $s$ and $t$ are connected in $G$.
    \item If $s$ and $t$ are in the same connected component which is a $(N',D,\lambda)$-graph, then the algorithm outputs "connected".
\end{itemize}
\end{lemma}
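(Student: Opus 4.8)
The plan is to have $\mathcal{A}_\lambda$ perform an exhaustive search over all walks of bounded length starting at $s$, where the length bound comes from Proposition~\ref{expand length}. Fix $\ell = c_\lambda \log N$, where $c_\lambda$ depends only on $\lambda$ (through the edge-expansion constant $\epsilon$ of Proposition~\ref{expand length}). A walk of length at most $\ell$ from $s$ is specified by a sequence of edge labels $(i_1,\dots,i_\ell)\in[D]^\ell$, and given such a sequence one can compute its endpoint by iterating the rotation map: set $v_0 = s$ and, for $k=1,\dots,\ell$, let $(v_k,j_k)=\Rot_G(v_{k-1},i_k)$, discarding $j_k$. The algorithm loops over all $D^\ell$ sequences in lexicographic order; for each one it recomputes the endpoint $v_\ell$ from scratch, keeping only a counter for the current sequence and a single register for the "current" vertex. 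If the current vertex ever equals $t$, it halts and outputs "connected"; if the loop finishes without this happening, it outputs "not connected".

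For correctness, note that whenever the algorithm outputs "connected" it has traced an explicit walk in $G$ from $s$ to $t$ (the rotation map records only genuine edges), so $s$ and $t$ are connected; this gives the first bullet. For the second bullet, suppose $s$ and $t$ lie in a common connected component $H$ that is an $(N',D,\lambda)$-graph with $N'\le N$. By Proposition~\ref{expand length} there is a path from $s$ to $t$ inside $H$ of length $O(\log N')\le c_\lambda\log N = \ell$ (enlarging $c_\lambda$ if needed so this holds for all $N'\le N$); reading off the edge labels of this path yields one of the sequences the algorithm enumerates, so the search reaches $t$ and the algorithm outputs "connected".

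It remains to verify the space bound, which is the point requiring the most care. The outer counter ranges over $[D]^\ell$ and thus uses $\ell\cdot O(\log D) = O(\log D\log N)$ bits; the loop index $k\le\ell$ costs an additional $O(\log\log N)$ bits; and a single step of the traversal needs one vertex register of $O(\log N)$ bits together with $O(\log N+\log D)$ bits to index into the input while looking up one value of $\Rot_G$. Summing, the total working space is $O(\log D\log N)$. The crucial observation is that, because the endpoint is recomputed from the counter on each iteration, we never store more than one intermediate vertex at a time; storing the whole vertex sequence of a walk would instead cost $\Theta(\ell\log N)=\Theta(\log^2 N)$, which would be too much. So the main obstacle is not the algorithmic idea (which is just brute force guided by the short-diameter guarantee) but organizing the search so that this single-register invariant is maintained throughout.
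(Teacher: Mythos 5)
Your proposal is correct and follows essentially the same route as the paper's proof: exhaustively enumerate all edge-label sequences of length $O(\log N)$ out of $s$ (with the length bound supplied by Proposition~\ref{expand length}), check whether any walk reaches $t$, and observe that storing a walk as its label sequence costs only $O(\log D\log N)$ bits. Your version is somewhat more explicit about recomputing the current vertex via the rotation map rather than storing intermediate vertices, but this is the same algorithm and the same space accounting as in the paper.
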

\begin{proof}
Consider the algorithm of simply enumerating all paths of length $l$ from $s$, where we take $l=O(\log N)$ given by Proposition\ref{expand length}, with the constant only depending on $\lambda$. Such enumeration can be done via the ordering of edges of each vertex. \\
The algorithm outputs "connected" if there is a path of at most length $l$ to $t$. The algorithm runs in $O(l\log D)=O(\log D\log N)$ space as each edge of a vertex requires $\log D$ space and each path of length $l$ can be stored in $O(\log N)$ space. The algorithm satisfies the requirements stated above due to Proposition\ref{expand length}.
\end{proof}
By the explicit construction given by Alon and Roichman \cite{alon1994random} using Cayley graph of the group $\mathbb{F}_2^m$ which is $m$ dimensional vector space of the field $\mathbb{F}_2$, we have existence of expander graphs with desired parameters. 
\begin{proposition}
There exist an undirected $D_0$-regular $((D_0)^{16},D_0,\frac{1}{2})$-graph for some $D_0$.
\label{exist D0 graph}
\end{proposition}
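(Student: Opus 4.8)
The plan is to prove existence by the probabilistic method, using random Cayley graphs of the elementary abelian group $\mathbb{F}_2^m$, which is the content behind the cited Alon--Roichman construction. I would fix an integer $m$, choose generators $g_1,\dots,g_d \in \mathbb{F}_2^m$ independently and uniformly at random (with $d$ to be pinned down later), and let $G$ be the Cayley multigraph on vertex set $\mathbb{F}_2^m$ in which $x$ is joined to $x+g_i$ for each $i$. Since $-g_i=g_i$ in $\mathbb{F}_2^m$, this $G$ is automatically undirected and $d$-regular on $N=2^m$ vertices (self-loops and parallel edges are allowed, which the preliminaries permit).

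The key step is the spectral computation. Because $\mathbb{F}_2^m$ is abelian, the translation permutation matrices commute and are simultaneously diagonalized by the characters $\chi_a(x)=(-1)^{\langle a,x\rangle}$, $a\in\mathbb{F}_2^m$. Hence the normalized adjacency matrix of $G$ has eigenvalues $\mu_a = \frac1d\sum_{i=1}^d \chi_a(g_i)$, with $\mu_0=1$ from the all-ones vector, so it suffices to force $|\mu_a|<\tfrac12$ for every $a\neq 0$. For fixed $a\neq 0$ the values $\chi_a(g_1),\dots,\chi_a(g_d)$ are i.i.d.\ uniform $\pm1$, so Hoeffding's inequality gives $\Pr[\,|\mu_a|\ge \tfrac12\,]\le 2e^{-d/8}$, and a union bound over the fewer than $2^m$ nontrivial characters yields $\Pr[\lambda(G)\ge\tfrac12]\le 2^{m+1}e^{-d/8}$, which is strictly below $1$ whenever $d>8(m+1)\ln 2$; in particular $d\ge 6(m+1)$ suffices.

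It then remains to choose parameters matching the rigid shape $N=D_0^{16}$, $D=D_0$. Writing $m=16k$ forces $N=2^{16k}=(2^k)^{16}$, so I would set $D_0:=2^k$ and take exactly $d=D_0$ random generators. The threshold $d\ge 6(m+1)$ becomes $2^k\ge 6(16k+1)$, which holds for all large enough $k$ (for instance $k=10$, giving $D_0=1024$), since $2^k$ grows exponentially while $16k+1$ is linear. For such a $k$, the estimate above shows that with positive probability the random Cayley graph is an undirected $D_0$-regular $(D_0^{16},D_0,\tfrac12)$-graph, so at least one such graph exists; and because $D_0$ is then an absolute constant, this graph can simply be hard-coded, which is all the later algorithm needs of it.

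The main obstacle I anticipate is not the spectral bound---for $\mathbb{F}_2^m$ the characters are $\pm1$-valued, so Hoeffding applies verbatim---but confirming the parameter arithmetic is consistent: the constraint $N=D_0^{16}$ couples the degree and the vertex count, and one must check the Alon--Roichman-type requirement ``degree $\gtrsim \log N$'' survives that coupling. This works precisely because $\log N = m = 16\log_2 D_0$ is only logarithmic in $D_0$, so $D_0$ eventually dominates; the only care needed is exhibiting an explicit admissible $k$. A minor point is that the definition of an $(N,D,\lambda)$-graph asks for the strict inequality $\lambda(G)<\tfrac12$, but the Hoeffding bound leaves ample slack (one could target $\tfrac13$), so strictness is automatic.
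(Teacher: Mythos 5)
Your proposal is correct and follows essentially the same route the paper takes: the paper simply cites the Alon--Roichman construction via Cayley graphs of $\mathbb{F}_2^m$, and your argument is exactly the probabilistic proof behind that citation (characters diagonalize the normalized adjacency matrix, Hoeffding plus a union bound over the $2^m-1$ nontrivial characters, then positive probability of $\lambda(G)<\tfrac12$). Your additional parameter check that the coupling $N=D_0^{16}$, $d=D_0=2^k$ is consistent once $2^k\ge 6(16k+1)$ (e.g.\ $k=10$) is sound and makes the statement self-contained rather than deferred to the reference.
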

The value $1-\lambda(G)$ is called the spectral gap of a graph. We have shown that for a disconnected graph, the spectral gap is 0. Due to result by Alon \cite{alon2000bipartite}, the converse holds for non-bipartite graphs:
\begin{proposition}[\textbf{Alon}]
\label{lambda upper bound}
For every D-regular connected non-bipartite graph $G$ with $N$ vertices, the spectral gap is at least $\frac{1}{DN^2}$. Equivalently, $\lambda(G)\leq 1-\frac{1}{DN^2}$.
\end{proposition}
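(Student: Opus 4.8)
\emph{Proof strategy.} The plan is to pass to the normalized adjacency matrix $M=A/D$, whose eigenvalues, by the earlier proposition on eigenvalues of $D$-regular graphs, satisfy $1=\mu_1\ge\mu_2\ge\cdots\ge\mu_N\ge-1$. Here $\lambda(G)$ should be read as the second largest eigenvalue \emph{in absolute value}, i.e.\ $\lambda(G)=\max(|\mu_2|,|\mu_N|)$; this is the quantity that governs expansion, and it is precisely why the non-bipartite hypothesis (which forbids $\mu_N=-1$) is needed. It therefore suffices to establish the two separate estimates
\begin{equation}
\mu_2 \;\le\; 1-\frac{1}{DN^2}\,,\qquad \mu_N \;\ge\; -1+\frac{1}{DN^2}\,,
\end{equation}
whose conjunction gives $\lambda(G)\le 1-\frac{1}{DN^2}$. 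The first will use only that $G$ is connected; the second will use connectedness together with non-bipartiteness. Both will come from the Courant--Fischer variational formulas $1-\mu_2=\min\{f^T(I-M)f/f^Tf: f\perp\mathbf{1}\}$ and $1+\mu_N=\min\{f^T(I+M)f/f^Tf: f\neq 0\}$, combined with the standard quadratic-form identities for $D$-regular $G$,
\begin{equation}
f^T(I-M)f=\frac{1}{D}\sum_{\{u,v\}\in E}(f_u-f_v)^2\,,\qquad f^T(I+M)f=\frac{1}{D}\sum_{\{u,v\}\in E}(f_u+f_v)^2\,.
\end{equation}

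For $\mu_2$: let $f\perp\mathbf{1}$ be a $\mu_2$-eigenvector, pick a vertex $u^\ast$ maximizing $|f_{u^\ast}|$ (so $f_{u^\ast}^2\ge\|f\|^2/N$), and, since $\sum_v f_v=0$, pick a vertex $w$ with $f_w$ of the opposite sign, so that $|f_{u^\ast}-f_w|\ge|f_{u^\ast}|$. Connectedness provides a path $u^\ast=x_0,x_1,\dots,x_\ell=w$ with $\ell\le N-1$ (the diameter bound); telescoping $f_{u^\ast}-f_w=\sum_i(f_{x_i}-f_{x_{i+1}})$ and applying Cauchy--Schwarz along this (edge-simple) path gives $|f_{u^\ast}-f_w|^2\le\ell\sum_i(f_{x_i}-f_{x_{i+1}})^2\le(N-1)\,D\,f^T(I-M)f$. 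Chaining the inequalities yields $f^T(I-M)f\ge\|f\|^2/(N(N-1)D)\ge\|f\|^2/(DN^2)$, i.e.\ $1-\mu_2\ge 1/(DN^2)$.

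For $\mu_N$: the combinatorial input is that a breadth-first search from any vertex $u^\ast$ splits $V$ into layers $V_0,V_1,\dots,V_r$ with $r\le N-1$, and that a connected \emph{non-bipartite} graph must carry an edge $\{a,b\}$ with both endpoints in a single layer $V_k$ (otherwise the even/odd-layer partition would be a proper $2$-colouring); concatenating a shortest $u^\ast$--$a$ path, the edge $\{a,b\}$, and a shortest $b$--$u^\ast$ path then yields a closed walk $u^\ast=x_0,x_1,\dots,x_L=u^\ast$ of \emph{odd} length $L=2k+1\le 2N-1$. Let $f$ be a $\mu_N$-eigenvector and take $u^\ast$ to maximize $|f_{u^\ast}|$ as the basepoint of such a walk. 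The key gadget, the odd-length analogue of the telescoping step above, is the alternating-sign identity
\begin{equation}
\sum_{i=0}^{L-1}(-1)^i(f_{x_i}+f_{x_{i+1}}) \;=\; f_{x_0}+f_{x_L} \;=\; 2f_{u^\ast}\,,
\end{equation}
which holds exactly because $L$ is odd. Cauchy--Schwarz gives $4f_{u^\ast}^2\le L\sum_i(f_{x_i}+f_{x_{i+1}})^2$, and since each edge of $G$ is used at most twice by this walk (the two shortest paths coincide only on a BFS-tree prefix, and $\{a,b\}$ is a non-tree edge), $\sum_i(f_{x_i}+f_{x_{i+1}})^2\le 2D\,f^T(I+M)f$. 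Combining with $f_{u^\ast}^2\ge\|f\|^2/N$ and $L\le 2N-1$ gives $f^T(I+M)f\ge 2f_{u^\ast}^2/(LD)\ge\|f\|^2/(DN^2)$, i.e.\ $1+\mu_N\ge 1/(DN^2)$.

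The $\mu_2$ estimate is the routine part --- essentially the textbook ``second eigenvalue versus diameter'' bound. The main obstacle is the $\mu_N$ estimate: one has to recognize the alternating-sign telescoping identity as the correct way to cash in non-bipartiteness, and then be quantitatively careful, because the target $1/(DN^2)$ is tight enough that one genuinely needs the odd closed walk to have length at most $2N-1$ (which is exactly what the BFS-layer characterization of bipartiteness supplies) and needs each edge reused only a bounded number of times; a looser walk length or a larger reuse factor would drop the bound below $1/(DN^2)$.
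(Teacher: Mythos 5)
Your argument is essentially correct, but it is a genuinely different route from the paper: the paper does not prove this proposition at all, it simply defers to Alon's Theorem 1.1 in the cited reference, whereas you give a self-contained variational proof. Your two-sided strategy (bounding $1-\mu_2$ via a telescoping/Cauchy--Schwarz estimate along a simple path between a maximal-magnitude vertex and an opposite-sign vertex, and bounding $1+\mu_N$ via the alternating-sign telescoping identity along an odd closed walk of length at most $2N-1$ extracted from a BFS layering with an intra-layer edge) is the standard proof of Alon's bound, and the constants do work out: $1-\mu_2\ge 1/(DN(N-1))$ and $1+\mu_N\ge 2\|f\|^2/(N(2N-1)D)\ge 1/(DN^2)$. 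You are also right that the statement only makes sense, and only needs non-bipartiteness, if $\lambda(G)$ is read as the largest eigenvalue in absolute value among $\mu_2,\dots,\mu_N$; that is the convention in Reingold's paper and in Alon's theorem, even though the present paper's Definition says ``second largest eigenvalue.'' Two small points to tidy up: (i) the justification ``the two shortest paths coincide only on a BFS-tree prefix'' is unnecessary and not quite right as stated --- the edge-reuse factor of $2$ follows simply because each of the two shortest paths is simple and the intra-layer edge $\{a,b\}$ lies on neither (it joins vertices at equal distance from $u^\ast$); (ii) since the paper allows self-loops and parallel edges, the identity $f^T(I+M)f=\frac{1}{D}\sum_{\{u,v\}\in E}(f_u+f_v)^2$ needs a convention for loops (a loop contributes $2f_u^2$ rather than $(2f_u)^2$ under the labeling convention used here), but the inequality you actually need, namely that the walk's quadratic sum is at most $2D\,f^T(I+M)f$, survives this adjustment, including the case where the odd-cycle witness is itself a loop. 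What your approach buys is a fully elementary, quantitative proof inside the paper; what the paper's citation buys is brevity and the exact constant as stated by Alon.
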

\begin{proof}
This is Theorem 1.1 in \cite{alon2000bipartite}.
\end{proof}

\subsection{Graph Powering and Zig-zag Products}
We will now introduce operations of graphs to change its degree and spectral gap, i.e. its expansion properties. We will first introduce graph powering, which reduces its second eigenvalue and increases its spectral gap, but also increases its degree. We will then define the zig-zag product of two graphs, which was first introduced by Reingold, Vadhan and
Wigderson \cite{reingold2000entropy}. This operation reduces the degree of a graph without significantly varying the spectral gap.\\

Recall that the labeling of edges of an undirected D-regular graph is given by the rotation map. Equivalently, the graph is defined by the rotation map. So let us define graph powering via rotation maps.
\begin{definition}
For a undirected $D$-regular graph $G$ with $N$ vertices given by the rotation map $\Rot_G$, the \textbf{$t^{th}$ power} of of $G$ is the $D^t$-regular graph $G^t$ defined by the rotation map $$\Rot_{G^t}(v_0,(a_1,...,a_t))=(v_t,(b_t,...,b_1))$$
for any $v_0,v_t\in [N]$ and $a_1,...,a_t\in [D]$ where $b_1,...,b_t$ are computed by $(v_i,b_i)=\Rot_G(v_{i-1},a_i)$.
\end{definition}
One can view the vector $(a_1,...,a_t)\in [D^t]$ as a path from $v_0$ to $v_t$ where each $a_i$ is the action of taking edge $a_i$ of the current vertex during traversal of the path. The vector $(b_t,...,b_1)\in [D^t]$ is simply the same path backwards, starting from $v_t$ and ending in $v_0$. This definition coincides with the usual definition of graph powering where two vertices is adjacent in the $t^{th}$ power if there is a path of length $t$ in the original graph.

\begin{proposition}
The normalized adjacency matrix of $G^t$ is given by $M^t$ where $M$ is the normalized adjacency matrix of $G$. Consequently, if $G$ is a $(N,D,\lambda)$-graph, then $G^t$ is a $(N,D^t,\lambda^t)$-graph.
\label{power spectral gap}
\end{proposition}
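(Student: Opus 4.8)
The statement has two halves: the matrix identity $M_{G^t}=M^t$, and the spectral consequence for the parameters $(N,D^t,\lambda^t)$. The plan is to derive the first half directly from the rotation-map definition of $G^t$ together with Equation~(1), which re-expresses any adjacency matrix as a count over the rotation map; once $A_{G^t}=A^t$ is known, normalizing by $D^t$ gives $M_{G^t}=M^t$, and the second half is then a short application of the spectral theorem already invoked earlier in the excerpt.

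For the first half, fix $u,v\in[N]$ and apply Equation~(1) to $G^t$:
\begin{equation*}
(A_{G^t})_{u,v}=\big|\{(\mathbf{a},\mathbf{b})\in[D^t]^2:\Rot_{G^t}(u,\mathbf{a})=(v,\mathbf{b})\}\big|,
\end{equation*}
with $\mathbf{a}=(a_1,\dots,a_t)$. Unwinding the recursion $(v_i,b_i)=\Rot_G(v_{i-1},a_i)$ with $v_0=u$, one sees that choosing $\mathbf{a}$ determines all the intermediate vertices $v_1,\dots,v_{t-1}$, the endpoint $v_t$, and all labels $b_1,\dots,b_t$; thus for each $\mathbf{a}$ ending at $v_t=v$ there is exactly one admissible $\mathbf{b}$, and the count above equals the number of tuples $\mathbf{a}$ whose induced path from $u$ ends at $v$. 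Since, by Equation~(1) applied to $G$ at each step, the number of choices of $a_i$ sending $v_{i-1}$ to a fixed neighbour $v_i$ is exactly $A_{v_{i-1},v_i}$, summing over the intermediate vertices gives $\sum_{v_1,\dots,v_{t-1}}\prod_{i=1}^{t}A_{v_{i-1},v_i}=(A^t)_{u,v}$, the defining formula for the matrix power. Hence $A_{G^t}=A^t$ and $M_{G^t}=(A/D)^t=M^t$; along the way one also checks that $\Rot_{G^t}$ is a genuine permutation of $[N]\times[D^t]$ — its square is the identity, since reversing a path twice recovers it — so $G^t$ is a well-defined undirected $D^t$-regular graph.

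For the second half, $M$ is real symmetric, so by the spectral theorem it is diagonalizable with real eigenvalues, which by the earlier proposition satisfy $1=\mu_1\ge\mu_2\ge\dots\ge\mu_N\ge-1$. The same eigenbasis diagonalizes $M^t$, with eigenvalues $\mu_1^t,\dots,\mu_N^t$; the all-ones vector is still fixed, so $\mu_1^t=1$ is again the top eigenvalue (equivalently $D^t$ is the top eigenvalue of $A_{G^t}=A^t$), and the remaining eigenvalues are the $\mu_i^t$ for $i\ge2$. Reading $\lambda(\cdot)$ as the second-largest eigenvalue in absolute value, $\lambda(G^t)=\max_{i\ge2}|\mu_i^t|=\big(\max_{i\ge2}|\mu_i|\big)^t=\lambda(G)^t$. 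Since $G$ being an $(N,D,\lambda)$-graph means $\lambda(G)<\lambda$, we conclude $\lambda(G^t)<\lambda^t$, i.e.\ $G^t$ is an $(N,D^t,\lambda^t)$-graph.

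I expect the bookkeeping in the first half to be the main obstacle: one must verify carefully that the nested applications of $\Rot_G$ in the definition of $G^t$ correspond, label by label, to exactly one term $A_{v_{i-1},v_i}$ in the expansion of $(A^t)_{u,v}$, with no over- or under-counting of parallel edges — this is precisely where the exact meaning of the rotation map (and of Equation~(1)) is doing the work. The linear-algebra half is routine, the only delicate point being the remark above that $\lambda(G)$ must be interpreted as a second-largest \emph{magnitude}, so that $\lambda(G^t)=\lambda(G)^t$ continues to hold for even $t$ when $M$ has eigenvalues near $-1$.
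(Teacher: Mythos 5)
Your proof is correct and follows essentially the same route as the paper: establish $A_{G^t}=A^t$ by counting length-$t$ edge-paths (you merely make the count explicit via the rotation map and Equation~(1)), then conclude $\lambda(G^t)=\lambda(G)^t$ from the spectral theorem. Your remark that $\lambda(\cdot)$ should be read as the second-largest eigenvalue in \emph{absolute value} for the identity $\lambda(G^t)=\lambda(G)^t$ to be safe is a sharpening the paper glosses over, but it does not change the argument.
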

\begin{proof}
From the above discussion, for any two vertices $v_0, v_t\in G^t$, the number of edges between $v_0$ and $v_t$ in $G^t$ is equal to the number of length $t$ paths from $v_0$ to $v_t$, where paths are defined using edges instead of vertices. The number of paths from $v_0$ to $v_t$ is in turn equal to $(A^t)_{v_0,v_t}$, where $A$ is the adjacency matrix of $G$. So $A^t$ is the adjacency matrix of $G^t$. Thus the normalized adjacency matrix of $G^t$ is given by $D^{-t}A^t=M^t$ where $M$ is the normalized adjacency matrix of $G$.\\
If $G$ is a $(N,D,\lambda)$-graph with normalized adjacency matrix $M$. Since the normalized adjacency matrix of $G^t$ is given by $M^t$, we have $\lambda(G^t)=\lambda(G)^t$. Thus $G^t$ is a $(N,D^t,\lambda^t)$-graph.
\end{proof}

For a $(N,D,\lambda)$-graph, powering increases the spectral gap exponentially, but the degree of the graph also increases exponentially. On the other hand, the zig-zag product reduces the degree of the graph but remains the spectral gap nearly unchanged. 

\begin{definition}
Let $G$ be a D-regular graph on $[N]$ with rotation map $\Rot_G$, $H$ be a d-regular graph on $[D]$ with rotation map $\Rot_H$. Then their \textbf{zig-zag product} $G\zigzag H$ is a $d^2$-regular graph on $[N]\times [D]$ with rotation map $\Rot_{G\zigzag H}$ defined by:
$$\Rot_{G\zigzag H}((v,a),(i,j))=((w,b),(j',i'))$$
where $w,b,j',i'$ satisfies: there exist $a',b'\in [N]$ such that
\begin{itemize}
    \item $\Rot_H(a,i)=(a',i')$
    \item $\Rot_G(v,a')=(w,b')$
    \item $\Rot_H(b',j)=(b, j')$
\end{itemize}
\end{definition}
Reingold et al. showed that this product is well defined and $\lambda(G\zigzag H)$ is bounded as a function of $\lambda(G)$ and $\lambda(H)$.\cite{reingold2000entropy} This product replaces each vertex of $G$ via a copy of $H$. The edges of the product correspond to length 3 paths with two edges in $H$ and the middle edge in $G$. See Figure \ref{zigzag} below:

\begin{figure}[H]
\begin{center}
\includegraphics[width=10cm]{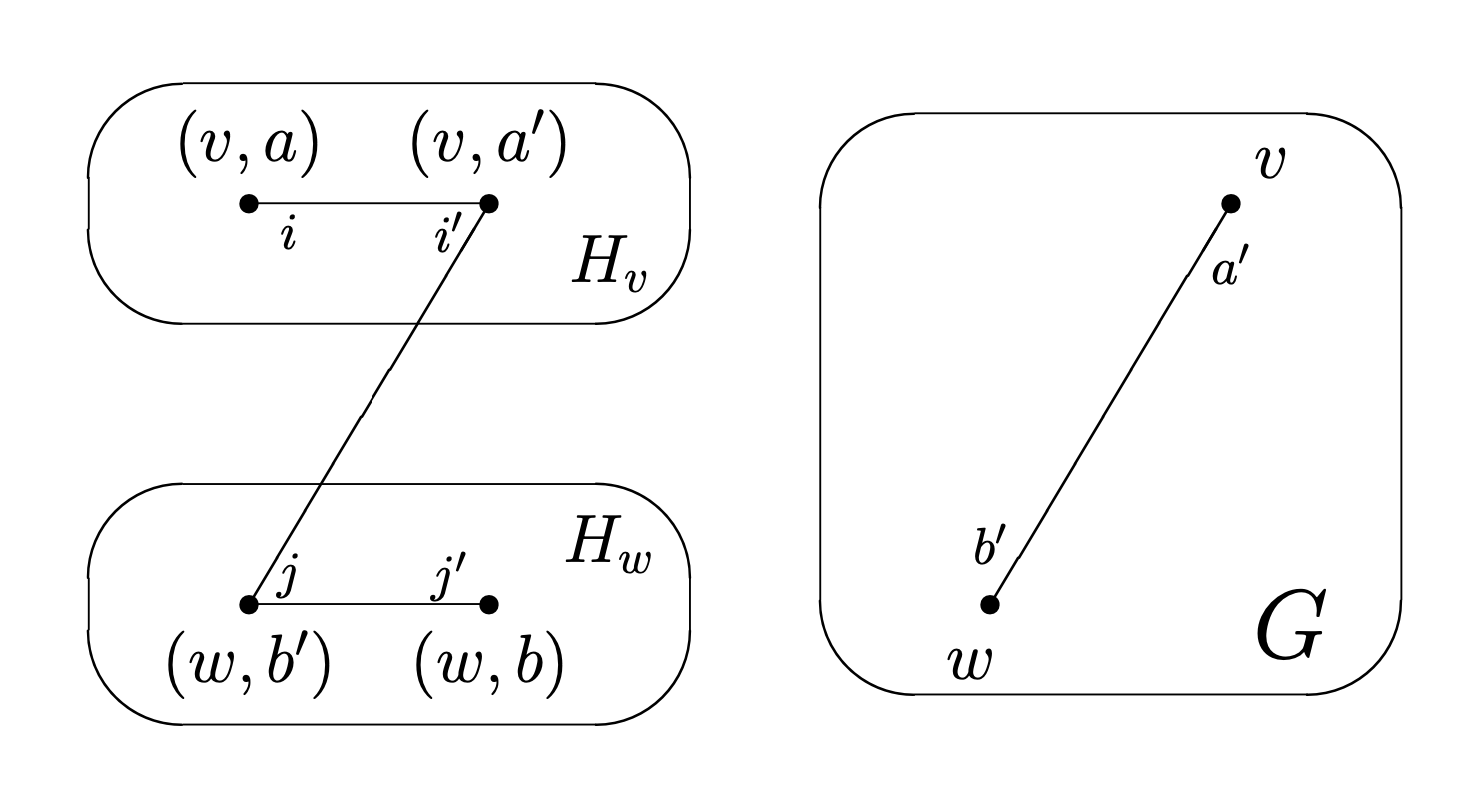}
\end{center}
\vspace*{8pt}
\caption{On the left hand side is the edge $((v,a),(i,j))$ in $G\zigzag H$. $H_v$ is the copy of $H$ for vertex $v$ and $H_w$ is the copy of $H$ for vertex $w$. The edge $((v,a),(i,j))$ correspond a length 3 path composed by $(a, i)$ in $H_v$, $(v,a')$ in $G$ and $(b', j)$ in $H_w$. The rotation map of the edge correspond to the same path traversing backwards. On the right hand side is the projection of the path on $G$, which correspond to the middle edge of the length 3 path.}
\label{zigzag}
\end{figure}
To reduce the degree of the graph $G$ while remaining the spectral gap, we want $d$, the degree of $H$ to be much smaller than $D=deg(G)$ so that $d^2=deg(G\zigzag H)$ is smaller than $D$. We also want $1-\lambda(G\zigzag H)>k(1-\lambda(G))$ for some constant $k\in(0,1)$ independent of $G$. These estimates on the spectral gap of the zig-zag product are given by Reingold et al. in \cite{reingold2000entropy}.

\begin{theorem}[\textbf{Reingold}]
\label{zigzag lower bound}
If $G$ is an $(N,D,\lambda)$-graph and $H$ is a $(D, d, \alpha)$-graph, then $G\zigzag H$ is an $(ND, d^2, f(\lambda,\alpha))$ graph where
$$f(\lambda, \alpha)=\frac{1}{2}(1-\alpha^2)\lambda+\frac{1}{2}\sqrt{(1-\alpha^2)^2\lambda^2+4\alpha^2}$$
\end{theorem}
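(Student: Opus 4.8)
The plan is to work with normalized adjacency matrices, recognize the zig-zag product's matrix as a product of three structured operators, and then read off the eigenvalue bound from the geometry of a natural subspace decomposition. Throughout, write $M$ for the normalized adjacency matrix of $G$ and $B$ for that of $H$, and identify $\mathbb{R}^{ND}$ with $\mathbb{R}^{N}\otimes\mathbb{R}^{D}$, the second factor indexing the $D$ ``clouds''.

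\emph{Step 1 (matrix identity).} Unwinding the three bullets defining $\Rot_{G\zigzag H}$, one step of $G\zigzag H$ is: an $H$-step inside the current cloud, then the $G$-edge prescribed by $\Rot_G$, then an $H$-step in the new cloud. The middle move is a permutation of $[N]\times[D]$; call its matrix $\dot A$. Since $\Rot_G$ is an involution (edge $i$ of $v$ and edge $j$ of $w$ are the \emph{same} edge), $\dot A$ is symmetric and orthogonal. The two $H$-steps are given by $\widetilde B:=I_N\otimes B$. I would check on basis vectors that the normalized adjacency matrix of $G\zigzag H$ equals $\widetilde M=\widetilde B\,\dot A\,\widetilde B$.

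\emph{Step 2 (subspace decomposition).} Split $\mathbb{R}^{ND}=W_\parallel\oplus W_\perp$, where $W_\parallel=\mathbb{R}^{N}\otimes\mathbf{u}$ and $\mathbf{u}$ is the uniform unit vector on the clouds. Then $\widetilde B$ is block-diagonal: the identity on $W_\parallel$ (because $B\mathbf{u}=\mathbf{u}$) and, on $W_\perp$, a block $B_\perp$ of norm $\le\alpha$ (this is exactly $\lambda(H)<\alpha$). Writing $\dot A=\bigl(\begin{smallmatrix}M & C^{\top}\\ C & A_\perp\end{smallmatrix}\bigr)$ in these blocks, the top-left block is $M$ itself (averaging over a cloud, taking a $G$-step, averaging again performs one step of $M$); and because $\dot A$ is orthogonal and symmetric, $\dot A^{2}=I$ forces the identities
$$M^{2}+C^{\top}C=I,\qquad CC^{\top}+A_\perp^{2}=I,\qquad A_\perp C=-CM.$$
Finally $\mathbf 1_{ND}\in W_\parallel$ corresponds to $\mathbf 1_N$, on whose orthogonal complement $M$ has norm $\le\lambda$.

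\emph{Step 3 (reduce to a $2\times 2$ model).} Take an eigenvector $x=(\bar x,x_\perp)$ of $\widetilde M=\bigl(\begin{smallmatrix}M & C^{\top}B_\perp\\ B_\perp C & B_\perp A_\perp B_\perp\end{smallmatrix}\bigr)$ with $\bar x\perp\mathbf 1_N$ and eigenvalue $\nu$. Using the eigenvalue equations together with the three identities above, I would reduce the general estimate to the two-dimensional one in which the relevant part of $W_\parallel$ and all of $W_\perp$ are one-dimensional: there $M$ acts as a scalar $\mu$ with $|\mu|\le\lambda$, $B_\perp$ as a scalar $\beta$ with $|\beta|\le\alpha$, and orthogonality forces $\dot A=\bigl(\begin{smallmatrix}\mu & c\\ c & -\mu\end{smallmatrix}\bigr)$ with $c^{2}=1-\mu^{2}$. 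Then $\widetilde M=\bigl(\begin{smallmatrix}\mu & c\beta\\ c\beta & -\mu\beta^{2}\end{smallmatrix}\bigr)$, whose characteristic polynomial is $\nu^{2}-(1-\beta^{2})\mu\,\nu-\beta^{2}$.

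\emph{Step 4 (optimize) and the main obstacle.} The larger root of $\nu^{2}-(1-\beta^{2})\mu\,\nu-\beta^{2}=0$ is increasing in $\mu$ and in $\beta^{2}$, so over $|\mu|\le\lambda$, $|\beta|\le\alpha$ it is maximized at $\mu=\lambda$, $\beta^{2}=\alpha^{2}$, giving the positive root of $f^{2}-(1-\alpha^{2})\lambda f-\alpha^{2}=0$, which is precisely the displayed $f(\lambda,\alpha)$; the other root is smaller in absolute value, so $\lambda(G\zigzag H)\le f(\lambda,\alpha)$, while $d^{2}$-regularity and the vertex count $ND$ are immediate. The delicate point is Step 3: a bound that only invokes $\|M|_{\mathbf 1^\perp}\|\le\lambda$, $\|B_\perp\|\le\alpha$, $\|\dot A\|=1$ and the triangle inequality degrades to the weaker estimate $(|\nu|-\lambda)(|\nu|-\alpha^{2})\le\alpha^{2}$, which does not match $f$; pinning the worst case to the aligned $2\times 2$ configuration needs \emph{all} of the orthogonality relations on $\dot A$, and this careful accounting is the technical core I would take from Reingold--Vadhan--Wigderson.
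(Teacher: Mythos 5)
Your strategy is the right one---it is exactly the Reingold--Vadhan--Wigderson route, and note that the paper itself does not prove this statement at all: its ``proof'' is a citation of Theorem 4.3 of \cite{reingold2000entropy}. Your Steps 1 and 2 are sound: the normalized adjacency matrix of $G\zigzag H$ does factor as $\widetilde B\,\dot A\,\widetilde B$ with $\widetilde B=I_N\otimes B$ and $\dot A$ the symmetric permutation matrix of $\Rot_G$ (an involution), the compression of $\dot A$ to $W_\parallel$ is $M$, the relations $M^2+C^{\top}C=I$, $CC^{\top}+A_\perp^2=I$, $A_\perp C=-CM$ do follow from $\dot A^2=I$, and your $2\times 2$ endgame is correct: the largest root of $\nu^{2}-(1-\alpha^{2})\lambda\nu-\alpha^{2}$ is precisely the stated $f(\lambda,\alpha)$, and regularity $d^2$ and vertex count $ND$ are immediate.

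The genuine gap is the one you flag yourself: Step 3 is a statement of intent, not an argument. Nothing in the proposal shows that the worst case of the full $ND$-dimensional operator is attained in the aligned two-dimensional configuration, and, as you observe, an estimate using only $\|M|_{\mathbf 1^{\perp}}\|\le\lambda$, $\|B_\perp\|\le\alpha$, $\|\dot A\|=1$ yields at best $\max_{p^{2}+q^{2}=1}\bigl(\lambda p^{2}+2\alpha pq+\alpha^{2}q^{2}\bigr)$, the top eigenvalue of $\bigl(\begin{smallmatrix}\lambda&\alpha\\ \alpha&\alpha^{2}\end{smallmatrix}\bigr)$; this strictly exceeds $f(\lambda,\alpha)$ (at $\lambda=0$ it is $\tfrac12\bigl(\alpha^{2}+\alpha\sqrt{\alpha^{2}+4}\bigr)>\alpha=f(0,\alpha)$) and tends to $1+\alpha^{2}>1$ as $\lambda\to 1$, i.e.\ it is vacuous exactly in the regime where the theorem is applied in this paper (Lemma \ref{transform lambda upper bound} uses Corollary \ref{spectral zig zag lower bound} with $\lambda$ as large as $1-\frac{1}{DN^{2}}$). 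Pinning the extremal case to the $2\times 2$ model requires exploiting the involution relations on $\dot A$ to couple the cross term with the $W_\perp$--$W_\perp$ term; that coupling argument is the entire technical content of Theorem 4.3 in \cite{reingold2000entropy} and is not reconstructible from your outline. Either carry out that analysis explicitly (Section 4 of \cite{reingold2000entropy}) or cite it as the paper does; as written, your proposal is an accurate roadmap plus a correct final computation, but not a proof of the sharp bound.
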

\begin{proof}
This is Theorem 4.3 in \cite{reingold2000entropy}.
\end{proof}
\begin{corollary}
\label{spectral zig zag lower bound}
If $G$ is an $(N,D,\lambda)$-graph and $H$ is a $(D, d, \alpha)$-graph, then
$$1-\lambda(G\zigzag H)\geq \frac{1}{2}(1-\alpha^2)(1-\lambda).$$
\end{corollary}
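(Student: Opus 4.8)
The plan is to read the bound straight off Theorem~\ref{zigzag lower bound}. That theorem already tells us $\lambda(G\zigzag H)\le f(\lambda,\alpha)$ with $f(\lambda,\alpha)=\tfrac12(1-\alpha^2)\lambda+\tfrac12\sqrt{(1-\alpha^2)^2\lambda^2+4\alpha^2}$, so $1-\lambda(G\zigzag H)\ge 1-f(\lambda,\alpha)$, and it suffices to establish the purely elementary inequality
$$1-f(\lambda,\alpha)\ \ge\ \tfrac12(1-\alpha^2)(1-\lambda).$$
Here $\alpha$ and $\lambda$ are second eigenvalues of normalized adjacency matrices, so in particular $|\lambda|\le 1$ — this is the only structural fact we will need, and it comes from the earlier proposition bounding the eigenvalues of a $D$-regular graph by $D$.

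The argument is then a short algebraic manipulation. First I would expand the right-hand side as $\tfrac12(1-\alpha^2)-\tfrac12(1-\alpha^2)\lambda$ and note that the term $-\tfrac12(1-\alpha^2)\lambda$ also appears inside $1-f(\lambda,\alpha)$; cancelling it from both sides reduces the claim to
$$1-\tfrac12(1-\alpha^2)\ \ge\ \tfrac12\sqrt{(1-\alpha^2)^2\lambda^2+4\alpha^2}.$$
Rewriting the left-hand side as $\tfrac12(1+\alpha^2)$, this is equivalent to $1+\alpha^2\ge\sqrt{(1-\alpha^2)^2\lambda^2+4\alpha^2}$. Both sides are nonnegative, so squaring is reversible; after squaring and subtracting $4\alpha^2$ the inequality becomes $(1-\alpha^2)^2\ge(1-\alpha^2)^2\lambda^2$, i.e. $(1-\alpha^2)^2(1-\lambda^2)\ge 0$, which holds because $\lambda^2\le 1$. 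Tracing the equivalences back gives the corollary, and I would close by combining with Theorem~\ref{zigzag lower bound}.

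I do not expect a genuine obstacle here — the statement is really just an inequality about the explicit function $f$. The only points that deserve a sentence of care are verifying that the quantities one squares are nonnegative (the radicand is a sum of squares, and $1+\alpha^2>0$ trivially), so that the squaring step is an equivalence rather than a one-way implication, and recording why $|\lambda|\le 1$ so that the final step $(1-\alpha^2)^2(1-\lambda^2)\ge 0$ is justified. Everything else is routine display-chain bookkeeping.
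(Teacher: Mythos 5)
Your proposal is correct and is essentially the paper's own argument: both reduce the corollary to bounding the square-root term by $\tfrac12(1+\alpha^2)=1-\tfrac12(1-\alpha^2)$ using $\lambda^2\le 1$ together with the identity $(1-\alpha^2)^2+4\alpha^2=(1+\alpha^2)^2$, and then combine with Theorem~\ref{zigzag lower bound}. The only difference is cosmetic — you cancel the $\tfrac12(1-\alpha^2)\lambda$ term and square, while the paper substitutes $\lambda\le 1$ inside the radical directly.
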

\begin{proof}
Since $\lambda\leq 1$, we have
$$\frac{1}{2}\sqrt{(1-\alpha^2)^2\lambda^2+4\alpha^2}\leq \frac{1}{2}\sqrt{(1-\alpha^2)^2+4\alpha^2}=1-\frac{1}{2}(1-\alpha^2).$$
This Corollary is then a direct consequence of Theorem \ref{zigzag lower bound}.
\end{proof}

\section{Expander Transforms of Graphs}
In this section, we will introduce the \textbf{Main Transform} given by Reingold\cite{reingold2008undirected} which uses log-space to transform each connected component of a graph into an expander. This is the main part of the log-space algorithm for \USTCONN. 

\begin{definition}
\label{main transform}
For a $D^{16}$-regular graph $G$ on $[N]$ and $D$-regular graph $H$ on $[D^{16}]$. Then let the transformation $\mathcal{T}$ outputs the rotation map of $G_l$ where $G_l$ is defined recursively by:
$$G_i=(G_{i-1}\zigzag H)^8,\quad i=1,...,l$$
where $G_0=G$ and $l=2\ceil{\log DN^2}$. We will denote $\mathcal{T}_i(G,H)=G_i$ and $\mathcal{T}(G,H)=G_l$.
\end{definition}

From the properties of zig-zag product and graph powering, the graph $G_i$ is a $D^{16}$-regular graph over $[N]\times ([D^{16}])^i$. If $D$ is constant, then $l=O(\log N)$ and $G_l$ has $poly(N)$ vertices. We will first show that this transformation can transform $G$ into an expander.

\begin{lemma}
For a $D^{16}$-regular connected and non-bipartite graph $G$ over $[N]$, and $(D^{16},D,\lambda)$ graph $H$ with $\lambda\leq \frac{1}{2}$, we have $\lambda(\mathcal{T}(G,H))\leq \frac{1}{2}$.
\label{transform lambda upper bound}
\end{lemma}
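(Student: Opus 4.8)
The plan is to track the spectral gap $1-\lambda(G_i)$ through the recursion $G_i = (G_{i-1}\zigzag H)^8$ and show it reaches the value $\frac{1}{2}$ (i.e.\ $\lambda(G_i)\le\frac{1}{2}$) after $l = 2\lceil\log DN^2\rceil$ steps, at which point it cannot decrease below that. Concretely, I would combine Corollary~\ref{spectral zig zag lower bound} (with $\alpha = \lambda(H) \le \frac{1}{2}$, so $\frac{1}{2}(1-\alpha^2) \ge \frac{3}{8}$) with Proposition~\ref{power spectral gap} to obtain a one-step recurrence: if $\lambda(G_{i-1}) = \mu$, then $\lambda(G_{i-1}\zigzag H) \le 1 - \frac{3}{8}(1-\mu)$, and raising to the $8$th power gives $\lambda(G_i) \le \bigl(1 - \frac{3}{8}(1-\mu)\bigr)^8$.

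The key claim driving the argument is that one round of this operation roughly doubles the spectral gap as long as the gap is still small. Using $(1-x)^8 \le 1 - 8x + \binom{8}{2}x^2 \le 1 - 2x$ whenever $8x \le \frac{2}{3}$ — or more crudely, $(1-y)^8 \le \frac{1}{1+8y} \le 1 - 4y$ for $y$ small — I would show that whenever $\lambda(G_{i-1}) \ge \frac{1}{2}$ we have $1 - \lambda(G_i) \ge 2\,(1 - \lambda(G_{i-1}))$, i.e.\ $1-\lambda(G_i) \ge \min\{\frac12,\ 2(1-\lambda(G_{i-1}))\}$. Since $G$ is connected and non-bipartite, $G_0 = G$ is a $D^{16}$-regular connected non-bipartite graph, so Proposition~\ref{lambda upper bound} gives $1 - \lambda(G_0) \ge \frac{1}{D^{16}N^2}$. (I should first check that the zig-zag product and powering preserve connectedness and non-bipartiteness, or at least that the gap bound from Corollary~\ref{spectral zig zag lower bound} is enough to keep $\lambda(G_i) < 1$ at every stage, which it is since the right-hand side is strictly positive.) Iterating the doubling bound, after $i$ steps $1 - \lambda(G_i) \ge \min\{\frac{1}{2},\ 2^i \cdot \frac{1}{D^{16}N^2}\}$, and once $2^i \ge D^{16}N^2$, i.e.\ $i \ge 16\log D + 2\log N$, we get $1-\lambda(G_i) \ge \frac12$. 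Since $l = 2\lceil \log DN^2\rceil = 2\lceil \log D + 2\log N\rceil \ge 16\log D + 2\log N$ comfortably (for $D \ge 2$), the choice $l = 2\lceil\log DN^2\rceil$ suffices, and $\lambda(\mathcal{T}(G,H)) = \lambda(G_l) \le \frac{1}{2}$.

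The main obstacle is pinning down the elementary inequality that makes the gap genuinely grow by a constant factor bigger than $1$ per round — verifying $(1 - \frac{3}{8}(1-\mu))^8 \le 1 - 2(1-\mu)$ (equivalently $(1 - \frac{3}{8}t)^8 \le 1 - 2t$ for $t = 1-\mu \in [0,\frac12]$) and confirming the constant "$2$" actually holds on the whole relevant range rather than just infinitesimally. One clean way: for $t \in [0,\frac12]$, $(1-\tfrac38 t)^8 \le e^{-3t} \le 1 - 2t$, where the last step uses $e^{-x} \le 1 - \frac{2}{3}x$ on $[0,\frac32]$ and $3t \le \frac32$. Given that, the rest is bookkeeping: track the $\min$ with $\frac12$, confirm the edge case where the gap first crosses $\frac12$ (it can only stay $\ge\frac12$ thereafter, since $\lambda(G_i)\le\lambda(G_{i-1}\zigzag H)^8 \le (\tfrac12)^8 < \tfrac12$ once $\lambda(G_{i-1})\le\tfrac12$... actually more carefully, once $1-\lambda(G_{i-1})\ge\frac12$ the bound $1-\lambda(G_i)\ge\frac38\cdot 8\cdot\frac12$-type estimate keeps it $\ge\frac12$), and verify $l$ is large enough for all $D\ge 2$, $N\ge 1$.
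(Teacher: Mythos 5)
Your overall strategy is the same as the paper's: start from the spectral-gap bound for a connected non-bipartite regular graph (Proposition \ref{lambda upper bound}), combine Corollary \ref{spectral zig zag lower bound} with Proposition \ref{power spectral gap} to get the one-step recurrence $\lambda(G_i)\le\bigl(1-\tfrac38(1-\lambda(G_{i-1}))\bigr)^8$, argue that the gap grows geometrically until $\lambda$ falls to $\tfrac12$, and check that it then stays there. The paper phrases the growth as $\lambda(G_i)\le\lambda(G_{i-1})^2$ while $\lambda(G_{i-1})>\tfrac12$; you phrase it as doubling of the gap. Same idea, but your concrete verification of the key step fails. The inequality $(1-\tfrac38 t)^8\le 1-2t$ does \emph{not} hold on all of $[0,\tfrac12]$: at $t=0.4$ the left side is $0.85^8\approx 0.27$ while the right side is $0.2$, and at $t=\tfrac12$ the right side is $0$ while the left side is $(13/16)^8>0$. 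Your ``clean'' chain breaks at its second link: $e^{-3t}\le 1-2t$ is false for $t$ near $\tfrac12$ (e.g.\ $e^{-3/2}\approx 0.22>0$), and the auxiliary claim $e^{-x}\le 1-\tfrac23 x$ on $[0,\tfrac32]$ already fails at $x=1$ ($e^{-1}\approx 0.368>\tfrac13$). Doubling is genuinely true only while the gap is small, say $t\le\tfrac14$ (there $f(t)=(1-\tfrac38t)^8-(1-2t)$ has $f(0)=0$, $f(\tfrac14)\approx-0.045$, and $f$ first decreases then increases, so $f\le0$). Consequently your induction, as written, has a hole exactly for $\lambda(G_{i-1})\in(\tfrac12,\tfrac34)$: neither the doubling bound nor your closing remark (which only treats $\lambda(G_{i-1})\le\tfrac12$) covers it. The repair is short and is in effect what the paper does: either split the range --- for $1-\lambda(G_{i-1})\ge\tfrac14$ a single step already gives $\lambda(G_i)\le(1-\tfrac3{32})^8\approx 0.455<\tfrac12$, while for $1-\lambda(G_{i-1})\le\tfrac14$ doubling holds --- or prove the paper's form $\bigl(1-\tfrac13(1-\mu)\bigr)^8\le\mu^2$ for $\mu\ge\tfrac12$, which gives gap growth by the factor $1+\mu\ge\tfrac32$ on the whole range at once.

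A second, smaller discrepancy is in the parameter bookkeeping. You (correctly, if Proposition \ref{lambda upper bound} is applied to the $D^{16}$-regular graph $G_0$) take $1-\lambda(G_0)\ge\frac1{D^{16}N^2}$, so your count needs $l\ge 16\log D+2\log N$; the assertion that $l=2\lceil\log DN^2\rceil$ exceeds this ``comfortably for $D\ge2$'' is false unless roughly $N\ge D^{7}$, so your final check that $l$ suffices does not close for all admissible $N,D$. The paper's proof is instead calibrated to the bound $\lambda(G_0)\le 1-\frac1{DN^2}$, for which $2^l\ge(DN^2)^2$ is ample. If you keep the exponent $16$ you must either enlarge $l$ accordingly or use the fact that the gap nearly triples per step while it is small; in the intended application $D=D_0$ is a constant, so the final $O(\log N)$ statement is unaffected, but at the level of this lemma your numerology and the stated $l$ have to be made consistent.
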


\begin{proof}
Since $G$ is connected and non-bipartite, by Proposition \ref{zigzag lower bound},
$$\lambda(G_0)\leq 1-\frac{1}{DN^2}.$$
By Corollary \ref{spectral zig zag lower bound}, as $\lambda(H)\leq \frac{1}{2}$, we have
$$\lambda(G_{i-1}\zigzag H)\leq 1-\frac{3}{8}(1-\lambda(G_{i-1}))<1-\frac{1}{3}(1-\lambda(G_{i-1})),\quad i=1,...,l.$$
So by Proposition \ref{power spectral gap} we have
$$\lambda(G_i)=\lambda((G_{i-1}\zigzag H)^8)<(1-\frac{1}{3}(1-\lambda(G_{i-1})))^8.$$
When $\lambda(G_{i-1})\leq \frac{1}{2}$, we have $\lambda(G_i)< (\frac{5}{6})^8<\frac{1}{2}$. If we have $\lambda(G_i)\leq \frac{1}{2}$ for some $i=0,...,l$, then be induction, we would have $\lambda(\mathcal{T}(G,H))=\lambda(G_l)\leq\frac{1}{2}$ as desired. So let us suppose otherwise, $\lambda(G_{i})> \frac{1}{2}$ for all i. Then it is easy to show $\lambda(G_i)=(1-\frac{1}{3}(1-\lambda(G_{i-1})))^8\leq \lambda(G_{i-1})^2$. So 
$$\lambda(G_l)\leq (1-\frac{1}{DN^2})^{2^l}.$$
Since $(1-\frac{1}{x})^x<e^{-1}$ for all $x\geq 1$, we have
$$\lambda(G_l)\leq (1-\frac{1}{DN^2})^{2^l}\leq e^{-2}<\frac{1}{2}.$$
So $\lambda(\mathcal{T}(G,H))\leq \frac{1}{2}$.
\end{proof}

While the analysis in the previous lemma assumes $G$ is connected and non-bipartite, we will extend this analysis of $\mathcal{T}$ to any undirected graph $G$. Note that zig-zag product $G\zigzag H$ and graph powering operates separately on each connected component, $\mathcal{T}(G, H)$ operates on each connected component of $G$ separately. Let define the restriction of a graph:

\begin{definition}
For any graph $G$ and subset $S$ of its vertices $V$. Let $G|_S$ be the the subgraph of $G$ induced by $S$, which has vertices $S$ and edges arising from edges in $G$ which has both endpoints in $S$. 
\end{definition}

Note that $S$ is a connected component of $G$ if $G|_S$ is connected and $S$ is disconnected to vertices of $V\setminus S$. Now let us show that restriction to a connected component of $G$ commutes with taking the transformation $\mathcal{T}$. A crucial observation is that both $\mathcal{T}_i(G|_S,H)$ and $\mathcal{T}_i(G,H)$ are $D^{16}$-regular, with the same vertices. In addition, $\mathcal{T}_i(G|_S,H)$ is a subgraph of $\mathcal{T}_i(G,H)$. So we should have $\mathcal{T}(G|_S, H)=\mathcal{T}(G, H)|_{S\times ([D^{16}])^l}$. An formal proof using induction is given by Reingold which makes use of this observation.

\begin{lemma}
For a $D^{16}$-regular graph $G$ on $[N]$ and $D$-regular graph $H$ on $[D^{16}]$. If $S$ is a connected component of $G$, then
$$\mathcal{T}(G|_S, H)=\mathcal{T}(G, H)|_{S\times ([D^{16}])^l}.$$
\label{restrict}
\end{lemma}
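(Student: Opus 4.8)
The plan is to prove a slightly stronger statement by induction on $i=0,1,\ldots,l$: writing $G_i := \mathcal{T}_i(G,H)$ and $S_i := S\times([D^{16}])^i$, we claim that $S_i$ is a union of connected components of $G_i$ and that $\mathcal{T}_i(G|_S,H)=G_i|_{S_i}$. Taking $i=l$ recovers the lemma. The base case $i=0$ is immediate: $\mathcal{T}_0(G|_S,H)=G|_S=G_0|_{S_0}$, and $S$ is a connected component of $G_0=G$ by hypothesis, hence a union of components. Note also that since $S_i$ is a union of components of the $D^{16}$-regular graph $G_i$, the induced subgraph $G_i|_{S_i}$ is again $D^{16}$-regular, with rotation map obtained simply by restricting $\Rot_{G_i}$ to $S_i\times[D^{16}]$; so ``induced subgraph'' really does name a rotation map here and the inductive statement is well-posed.

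For the inductive step I would first isolate two ``locality'' facts about rotation maps, each proved by direct inspection of the relevant definition. (i) If $K$ is $D^{16}$-regular and $T$ is a union of connected components of $K$, then $T\times[D^{16}]$ is a union of connected components of $K\zigzag H$ and $(K|_T)\zigzag H=(K\zigzag H)|_{T\times[D^{16}]}$. Indeed, in the definition of $\Rot_{K\zigzag H}((v,a),(i,j))$ the only place the structure of $K$ enters is the middle step $\Rot_K(v,a')=(w,b')$, which consults only the edges of $K$ incident to $v$ and returns a $K$-neighbor $w$ of $v$; if $v\in T$ then $w\in T$, and that step coincides with the corresponding step of $\Rot_{K|_T}$. (ii) For any $t$, the connected components of $K^t$ refine those of $K$, because each edge of $K^t$ is a length-$t$ walk of $K$ and hence stays within one component of $K$; consequently any union $T$ of components of $K$ is a union of components of $K^t$, and moreover $(K|_T)^t=(K^t)|_T$, since in the definition of $\Rot_{K^t}$ a length-$t$ walk starting in $T$ stays in $T$ and the rotation map restricted to $T$ uses only edges of $K|_T$.

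Given (i) and (ii), the step is bookkeeping. Assume the claim for $i-1$. Apply (i) with $K=G_{i-1}$, $T=S_{i-1}$: then $S_{i-1}\times[D^{16}]=S_i$ is a union of components of $G_{i-1}\zigzag H$ and $(G_{i-1}|_{S_{i-1}})\zigzag H=(G_{i-1}\zigzag H)|_{S_i}$. Now apply (ii) with $K=G_{i-1}\zigzag H$, $T=S_i$, $t=8$: $S_i$ is a union of components of $(G_{i-1}\zigzag H)^8=G_i$ and $\big((G_{i-1}\zigzag H)|_{S_i}\big)^8=G_i|_{S_i}$. Chaining these two identities with the inductive hypothesis $\mathcal{T}_{i-1}(G|_S,H)=G_{i-1}|_{S_{i-1}}$ and the recursion $\mathcal{T}_i(G|_S,H)=(\mathcal{T}_{i-1}(G|_S,H)\zigzag H)^8$ from Definition~\ref{main transform} yields $\mathcal{T}_i(G|_S,H)=G_i|_{S_i}$, closing the induction.

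The only place requiring genuine care is verifying (i) and (ii) from the definitions, and in particular observing that it is essential that $S$ — and hence each $S_i$ — be closed under taking neighbors, i.e.\ a union of connected components, rather than an arbitrary vertex subset: if some edge of $G_i$ left $S_i$, then both the zig-zag product and the powering would ``see'' that edge, the induced subgraphs would cease to be regular, and the two rotation maps would fail to agree. Everything else is routine tracking of vertex sets and degrees.
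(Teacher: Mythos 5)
Your proof is correct, and it follows essentially the route the paper intends: the paper itself only cites Reingold's Lemma 3.3 and sketches the key observation, namely that the zig-zag product and powering act separately on connected components, which is exactly the content of your locality facts (i) and (ii) and the induction on $i$. In effect you have supplied the formal inductive argument that the paper delegates to the reference, including the necessary strengthening that each $S_i$ remains a union of components so that regularity and the restriction of the rotation maps are preserved at every stage.
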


\begin{proof}
 See Lemma 3.3 of \cite{reingold2008undirected}.
\end{proof}

Finally, we will show that $\mathcal{T}$ can be computed in log-space when $D$ is constant. This is essentially due to the fact that during each step of the inductive calculation of $\mathcal{T}_i(G,H)$, only constant addition amount of memory is needed. We will shot the space complexity of $\mathcal{T}$ in the follow lemma:

\begin{lemma}
For any constant $D$. Consider a $D^{16}$-regular graph $G$ over $[N]$ and $D$-regular graph $H$ over $[D^{16}]$, then $\mathcal{T}(G,H)$ can be computed in $\log(N)$ space. Equivalently, there exist an $O(\log N)$ space algorithm $\mathcal{A_T}$ on input $(G,H,(v,a))$ outputs $\Rot_{\mathcal{T}(G,H)}(v,a)$ where $v\in[N]\times ([D^{16}])^l$ and $a\in [D^{16}]$.
\end{lemma}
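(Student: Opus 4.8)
The plan is to realize $\mathcal{A}_{\mathcal{T}}$ as a recursive evaluation of the rotation maps of the chain $G_0=G,G_1,\ldots,G_l=\mathcal{T}(G,H)$, and then to account for the space carefully: the recursion has depth $l=O(\log N)$ (as $D$ is constant), and although a vertex of $G_l$ occupies $O(\log N)$ bits, this is stored once in a fixed set of ``global'' registers rather than copied afresh at every level, so that each level of the recursion adds only $O(\log D)=O(1)$ bits of bookkeeping.

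First I would extract the two reduction rules implicit in the definitions. For powering, a query $\Rot_{G_i}\big((v,\vec k),(a_1,\ldots,a_8)\big)$ is answered by performing, one after another (so that they share workspace), the eight queries $(u_m,b_m)=\Rot_{G_{i-1}\zigzag H}(u_{m-1},a_m)$ with $u_0=(v,\vec k)$, returning $(u_8,(b_8,\ldots,b_1))$; each sub-query is itself expanded recursively. For the zig-zag, a query $\Rot_{G_{i-1}\zigzag H}\big((x,c),(p,q)\big)$ is answered by one call to $\Rot_H$, then one recursive call to $\Rot_{G_{i-1}}$, then one more call to $\Rot_H$, where the $\Rot_H$ calls only read and modify the coordinate $c\in[D^{16}]$ and edge labels in $[D]$, i.e.\ $O(\log D)$ bits. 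Composing, a single rotation-map query on $G_i$ reduces to a constant number of $\Rot_H$ evaluations --- each computable directly from the input in $O(\log D)$ space since $D$ is constant --- together with eight queries on $G_{i-1}$, the recursion bottoming out at $\Rot_{G_0}=\Rot_G$, read off the input.

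Next I would fix the workspace layout. Since $l=2\ceil{\log DN^2}=O(\log N)$ for constant $D$, I maintain global registers for the current vertex $(v,k_1,\ldots,k_l)\in[N]\times([D^{16}])^l$, which is $O(\log N)$ bits total and is updated in place, together with a recursion stack of $l$ frames, where the frame for level $i$ stores only the loop counter $m\in\{1,\ldots,8\}$ for the powering step, a constant-size indicator for the current phase of the zig-zag, and a constant number of $O(\log D)$-bit temporaries for intermediate $H$-edge labels. The crucial observation is that the recursive $\Rot_{G_{i-1}}$ call inside the $G_i$ computation never touches the coordinate $k_i$ --- it only touches $v,k_1,\ldots,k_{i-1}$ and the deeper frames --- so there is never more than one ``active'' vertex of $G$ in memory and each level contributes only its own constant-size coordinate and constant-size bookkeeping. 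Thus the work space is $O(\log N)$ for the vertex registers plus $l\cdot O(\log D)=O(\log N)$ for the stack, hence $O(\log N)$. Finally, since $G_l$ has $N\cdot D^{16l}=poly(N)$ vertices the input and output tapes have polynomial length (so pointers into them fit in $O(\log N)$ bits), the input query vertex has $O(l)=O(\log N)$ bits, and the output edge label lies in $[D^{16}]$, i.e.\ $O(1)$ bits, so it can be assembled on the work tape and written out at the end and the one-way output tape poses no difficulty.

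I expect the main obstacle to be precisely this space accounting: arguing that the recursion can be organized so that the per-level overhead is $O(\log D)$ rather than $O(\log N)$, i.e.\ that one never materializes a full vertex of an intermediate $G_i$ separately at each level. Everything hinges on the ``threading'' structure --- a query on $G_i$ carries a single vertex of $G_{i-1}$ through a sequence of in-place updates to produce a single vertex --- so the expensive $[N]$-coordinate is shared across all live recursion frames while only the cheap $[D^{16}]$-coordinates accumulate, one per level. Reingold carries out exactly this bookkeeping, and once it is set up correctly the $O(\log N)$ bound is just the sum $O(\log N)+l\cdot O(\log D)$.
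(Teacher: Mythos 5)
Your proposal is correct and follows essentially the same route as the paper's proof: Reingold's algorithm $\mathcal{A_T}$ likewise evaluates $\Rot_{G_l}$ by recursion through the chain $G_0,\ldots,G_l$, keeping the vertex $(v,a_0,\ldots,a_l)$ in globally shared registers that are updated in place, with each recursion level contributing only a constant-size loop counter and $O(\log D)=O(1)$ bits of edge-label bookkeeping, giving $O(\log N)+l\cdot O(1)=O(\log N)$ space. The space-accounting point you flag as the crux (never materializing a full intermediate vertex per level) is exactly the paper's argument.
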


\begin{proof}
The algorithm $\mathcal{A_T}$ will first allocate variables $v\in[N]$ and $a_0,...,a_l\in[D^{16}]$. We will denote each $a_i=k_{i,1}...k_{i,16}$ where $k\in [D]$ correspond to edge labels of $H$. Now on input $(G,H,(v_{in},a_{in}))$, the algorithm will first copy $v_{in}=(\hat{v},\hat{a}_0,...,\hat{a}_{l-1})\in [N]\times[D^{16}]^{l}$ into the allocated variables $v,a_0,...,a_{l-1}$ and $a_{in}=\hat{a}_l\in [D^{16}]$ into $a_l$. These variables will store the the output of $\Rot_{G_i}$ on $\hat{v},\hat{a}_0,...,\hat{a}_i$ where $G_i=\mathcal{T}_i(G,H)$. We will recursively update the variables $v,a_0,...,a_l$ such that after the $i^{th}$ iteration, the variables $v,a_0,...,a_i$ will store the result of $\Rot_{G_i}((\hat{v},\hat{a}_0,...,\hat{a}_{i-1}),\hat{a}_i)$. For the base case, when $i=0$, $G_0=G$, so we can search in the input tape for the edge $(\hat{v}, \hat{a}_0)$ and write down $\Rot_{G_0}(\hat{v}, \hat{a}_0)$ in $v,a_0$. Now for $i=1,...,l$, we evaluate $\Rot_{G_i}((\hat{v},\hat{a}_0,...,\hat{a}_{i-1}),\hat{a}_i)$ via the following procedure:\\

\textbf{For j = 1 to 16:}
\begin{itemize}
    \item Set $a_{i-1},k_{i,j}=\Rot_H(a_{i-1},k_{i,j})$
    \item If $j$ is odd, recursively compute and set $v,a_0,...,a_{i-1}=\Rot_{G_{i-1}}((v,a_0,...,a_{i-2}),a_{i-1})$.
    \item If $j=16$, reverse the order of the labels in $a_i$: set $k_{i,1},...,k_{i,16}=k_{i,16},...,k_{i,1}$
\end{itemize}
The first two operations correspond to finding a a path of length eight on $G_{i-1}\zigzag H$, which is a step on $(G_{i-1}\zigzag H)^8$. The third bullet reverses the order of labels of $a_i$ to fit the definition of zig-zag and powering. The correctness of the induction follows from the definition of zig-zag product and powering. Thus the correctness of $\mathcal{A_T}$ follows from the inductive definition of $\mathcal{T}$.\\
Now note that within each level of the recursion tree, there are at most $16$ recursive calls, and the recursion tree has depth $l+1=O(\log N)$. So we can maintain the recursive calls with $O(\log N)$ space. Furthermore, the operations of evaluating $\Rot_G$, $\Rot_H$ and reversing labels can be done in $O(\log N)$ space. The space required to store the variables in $O(\log N)$ as $v$ requires $\log N$ space and $a_i$ can be stored in constant space. Thus the total space needed to store $a_i$'s is $O((l+1)*\log 16)=O(\log N)$. Therefore the algorithm $\mathcal{A_T}$ runs in $O(\log N)$ space.
\end{proof}

\section{\texorpdfstring{$\USTCONN\in\Lspace$}{TEXT}}
In this section, we will provide an log-space algorithm for $\USTCONN$ using the by transforming the input graph $G$ into an appropriate expander. 
\begin{theorem}
\label{Reingold proof}
For any undirected graph $G$ over $[N]$, there exists an $O(\log N)$ space algorithm $\mathcal{A}_{con}$ which computes $\USTCONN(G,s,t)$ where $s,t\in [N]$.
\end{theorem}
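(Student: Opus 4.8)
The plan is to compose three log-space procedures. First, a preprocessing step turning an arbitrary undirected $G$ into a $D^{16}$-regular graph $G'$ of constant degree, every connected component of which is non-bipartite, and whose connected components correspond to those of $G$. Second, the Main Transform $\mathcal{T}(\cdot,H)$, which turns each connected component into an expander while, by Lemmas \ref{restrict} and \ref{transform lambda upper bound}, preserving the component structure. Third, the constant-degree expander routine $\mathcal{A}_\lambda$ of Lemma \ref{algorithm regular}, applied with a fixed $\lambda\in(\frac{1}{2},1)$, say $\lambda=\frac{3}{4}$, which decides connectivity on the resulting graph. The key point throughout is that none of the intermediate graphs is ever written down: each is specified by a rotation map evaluated on demand by invoking the previous subroutine, so only $O(\log N)$ working memory is live at any time.

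Concretely, I would fix the constant $D:=D_0$ and the $(D_0^{16},D_0,\frac{1}{2})$-graph $H$ of Proposition \ref{exist D0 graph}. Given $G$ on $[N]$, I may assume it is simple (merging parallel edges and deleting self-loops is a log-space preprocessing that preserves connectivity) and that $s,t$ are not isolated (that case being settled by a direct $O(\log N)$ check). I would then form $G'$ by the standard reduction of \cite{reingold2008undirected} to constant-degree graphs: replace each vertex $v$ by a cycle through its incident edges, listed in adjacency-list order, and for each edge $\{u,v\}$ of $G$ join the node $\{u,v\}$ of $u$'s cycle to the node $\{u,v\}$ of $v$'s cycle; then attach $D^{16}-3$ self-loops at every node. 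The result is $D^{16}$-regular on $|V(G')|=2|E(G)|<N^2$ nodes, every connected component of which is non-bipartite because of the self-loops, with the property that $u$ and $v$ are connected in $G$ iff the clouds of $u$ and $v$ lie in a common connected component of $G'$, and $\Rot_{G'}$ is computable from the adjacency list of $G$ in $O(\log N)$ space.

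Next I would set $\hat G:=\mathcal{T}(G',H)$. With $l=2\ceil{\log(D\,|V(G')|^2)}=O(\log N)$, the graph $\hat G$ is $D^{16}$-regular on $N':=|V(G')|\cdot(D^{16})^l$ vertices; since $D$ is constant, $(D^{16})^l=2^{O(\log N)}=poly(N)$, so $\log N'=O(\log N)$. By Lemma \ref{restrict}, for each connected component $S$ of $G'$ the set $S\times([D^{16}])^l$ is a connected component of $\hat G$ with $\hat G|_{S\times([D^{16}])^l}=\mathcal{T}(G'|_S,H)$, and by Lemma \ref{transform lambda upper bound} — whose hypotheses hold since $G'|_S$ is $D^{16}$-regular, connected and non-bipartite and $\lambda(H)\leq\frac{1}{2}$ — this component has $\lambda(\,\cdot\,)\leq\frac{1}{2}$, hence is an $(|S|\,(D^{16})^l,D^{16},\frac{3}{4})$-graph. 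Thus every connected component of $\hat G$ is an expander of the form Lemma \ref{algorithm regular} requires. I would then fix $\tilde s:=(\sigma_s,1,\dots,1)$ and $\tilde t:=(\sigma_t,1,\dots,1)$ in $V(G')\times([D^{16}])^l$, with $\sigma_s,\sigma_t$ arbitrary nodes of the clouds of $s,t$ in $G'$; combining the two component-preserving facts, $s$ and $t$ are connected in $G$ if and only if $\tilde s$ and $\tilde t$ are connected in $\hat G$.

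Finally, $\mathcal{A}_{con}$ runs $\mathcal{A}_{3/4}$ of Lemma \ref{algorithm regular} on $(\hat G,\tilde s,\tilde t)$, servicing each rotation-map query $\mathcal{A}_{3/4}$ makes about $\hat G$ by a call to the $O(\log N)$-space algorithm $\mathcal{A_T}$, which itself services its queries about $G'$ using $\Rot_{G'}$. Since a composition of finitely many log-space computable maps is again log-space computable — recompute a subroutine from scratch whenever one of its output bits is needed, rather than store its output — and since $\mathcal{A}_{3/4}$ uses only $O(\log D^{16}\cdot\log N')=O(\log N)$ further space, the whole procedure runs in $O(\log N)$ space. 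For correctness, Lemma \ref{algorithm regular} guarantees that $\mathcal{A}_{3/4}$ answers ``connected'' only if $\tilde s$ and $\tilde t$ are connected, and that it answers ``connected'' whenever they share a connected component that is a $(N',D^{16},\frac{3}{4})$-graph — which, as shown, is the case precisely when they are connected at all; hence $\mathcal{A}_{con}$ decides $\USTCONN(G,s,t)$. I expect the main obstacle to be the space accounting under composition: verifying that evaluating $\Rot_{\hat G}$ on demand (rather than materializing the $poly(N)$-vertex graph $\hat G$) truly stays within $O(\log N)$, and that $N'=poly(N)$ so that $\log N'=O(\log N)$ — which is exactly where $D$ being a fixed constant is used in an essential way. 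The remaining graph-theoretic claims (component preservation, non-bipartiteness, per-component expansion) follow routinely from the lemmas already established.
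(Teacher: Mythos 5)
Your proposal is correct and follows essentially the same route as the paper: regularize $G$ into a constant-degree non-bipartite graph, apply the Main Transform $\mathcal{T}(\cdot,H)$ with the hard-coded $H$ of Proposition \ref{exist D0 graph}, invoke Lemmas \ref{restrict} and \ref{transform lambda upper bound} for per-component expansion, and finish with the path-enumeration algorithm of Lemma \ref{algorithm regular}, composing the rotation maps on demand to stay in $O(\log N)$ space. The only deviations are cosmetic: you use degree-sized edge-clouds (giving $2|E|$ nodes) where the paper uses length-$N$ cycles per vertex (giving $N^2$ nodes), and you run $\mathcal{A}_\lambda$ with $\lambda=\frac{3}{4}$ rather than $\frac{1}{2}$, which if anything handles the strict-inequality convention in the definition of an $(N,D,\lambda)$-graph more cleanly.
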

\begin{proof}
As we can transform between common representations of graphs in log-space, without loss of generality we can assume $G$ is given via the adjacency matrix representation.\\
By Proposition \ref{exist D0 graph}, for some constant $D_0$, there exists a $((D_0)^{16}, D_0, \frac{1}{2})$-graph $H$. Let us hard-code the rotation map of $H$ to the memory of $\mathcal{A}_{con}$. This takes only constant memory.\\
Now we would like to transform $G$ into $D^{16}$-regular graph $G_{reg}$ (which is defined by its rotation map) so that we can apply $\mathcal{T}$ on $(G_{reg},H)$. Let $G_{reg}$ be the graph constructed by replacing each vertex of $G$ with with a cycle of length N, and there is an edge between $(v,w)$ and $(w,v)$ in $G_{reg}$ if there is an edge between $v$ and $w$ in $G$. Self loops are added so that the degree of each vertex is $D_0^{16}$. The rotation map $\Rot_{G_{reg}}:([N])^2\times [D_0^{16}]\rightarrow ([N])^2\times [D_0^{16}]$ is given by:
$$\Rot_{G_{reg}}((v,w),i)=\begin{cases}
((v,(w+1)\mod N), 2),\quad i=1\\
((v,(w-1)\mod N), 1),\quad i=2\\
((w,v),3),\quad i=3\text{ and there is an edge between v and w in G}\\
((v,w),3),\quad i=3\text{ and there is an edge between v and w in G}\\
((v,w),i),\quad i=4,...,16\\
\end{cases}.$$
The first two cases are the edges of the cycle of length $N$ for vertex $v$. The last case are the self loops so that $G_{reg}$ is $D_0^{16}$ regular. Also note that every vertex of $G_{reg}$ has self loops, so $G_{reg}$ is non-bipartite. It is easy to see that $v$ and $w$ are in the same connected component of $G$ if and only if $v\times [N]$ and $w\times [N]$ are in the same connected component of $G_{reg}$. This in turn is equivalent to $(v,1)$ and $(w,1)$ are connected in $G_{reg}$.\\
Now let $G_{exp}=\mathcal{T}(G_{reg},H)$, where $l=O(\log N)$ defined in Definition \ref{main transform}. Let $S$ be the connected component of $s$ in $G$. Then $S\times [N]$ is a connected component of $G_{reg}$ where $G_{reg}|_{S\times [N]}$ is non-bipartite $D_0^{16}$-regular. So by Lemma \ref{restrict}, $S\times [N]\times ([D_0^{16}])^l$ is a connected component of $G_{exp}$ and 
$$\mathcal{T}(G_{reg}|_{S\times [N]}, H)=G_{exp}|_{S\times [N]\times ([D_0^{16}])^l}.$$
Thus by Lemma \ref{transform lambda upper bound}, we get
$$\lambda(G_{exp}|_{S\times [N]\times ([D_0^{16}])^l})\leq \frac{1}{2}.$$
Now let us run the $O(\log N)$ space algorithm $\mathcal{A}_{\lambda}$ with $\lambda = \frac{1}{2}$ on $G_{exp}$ and $(s,1^{l+1})$ and $(t,1^{l+1})$ given by Proposition \ref{algorithm regular}. The algorithm $\mathcal{A}_{con}$ will output "connected" if $\mathcal{A}_\lambda$ outputs connected, else it will output "disconnected".\\
The correctness of $\mathcal{A_T}$ follows from the discussion above, as $s$ and $t$ are connected in $G$ if and only if $s\times [N]\times ([D_0^{16}])^l$ and $t\times [N]\times ([D_0^{16}])^l$ are connected in $G_{exp}$. The algorithm runs in $O(\log N)$ space as computing $\Rot_{G_{reg}}$, the main transform $G_{exp}$ and running $\mathcal{A}_{\lambda}$ can be done using $O(\log N)$ space. So $\mathcal{A}_{con}$ computes $\USTCONN(G,s,t)$ using $O(\log N)$ space.
\end{proof}

\section{An alternative proof of \texorpdfstring{$\USTCONN\in\Lspace$}{TEXT}}
This section will contain an alternative proof of $\USTCONN\in\Lspace$ given by Rozenman and Vadhan\cite{rozenman2005derandomized}. In both proofs, the key idea is that $\USTCONN$ is solvable in log-space on bounded-degree graphs with logarithmic diameter by enumerating over all paths. Bounded-degree Expander graphs (graphs with second eigenvalue less than $\frac{1}{2}$) are instances of such graphs, both proofs transform the graph into an expander with bounded degree to solve $\USTCONN$. \\

In the proof of Reingold, given any undirected graph with $N$ vertices, Reingold first transforms the graph into a regular graph and then used a combination of graph powering and zig-zag product to transform the regular graph into an expander graph with constant degree over $poly(N)$ vertices, while maintaining the connectivity properties of vertices. Graph powering increases the connectivity of the graph, decreases $\lambda(G)$ while increasing the degree and number of vertices polynomially. Zig-zag product decreases the degree while keeping $\lambda(G)$ approximately still. The combination of both decreases $\lambda(G)$ to $\frac{1}{2}$ while maintaining the degree constant.\\

One the other hand, Rozenman and Vadhan's proof\cite{rozenman2005derandomized} shares the same overall process as Reingold, but used derandomized squaring instead of graph powering and zig-zag products to increase the connectivity of the graph. Iterating derandomized squaring yields highly connected graphs with relatively small degree compared graph powering while maintain the same number of vertices. 
\begin{definition}
Let $G$ be an undirected $D$-regular graph over $[N]$, let $H$ be an undirected d-regular graph over $[D]$. The derandomized square graph $G\deranproduct H$ is an undirected $Dd$-regular graph over $[N]$ with rotation map
$$\Rot_{G\deranproduct H}(v,(x,a))=(w,(h,b))$$
where
\begin{itemize}
    \item $(u,y) = \Rot_G(v,x)$
    \item $(z,b) = \Rot_H(y,a)$
    \item $(w,h) = \Rot_G(u,z)$
\end{itemize}
for any $v\in[N], x\in[D], a\in[d]$.
\end{definition}
An edge in $G\deranproduct H$ corresponds to a length 2 path in $G$. The following figure illustrates an edge of the derandomized square $G\deranproduct H$:
\begin{figure}[H]
\begin{center}
\includegraphics[width=10cm]{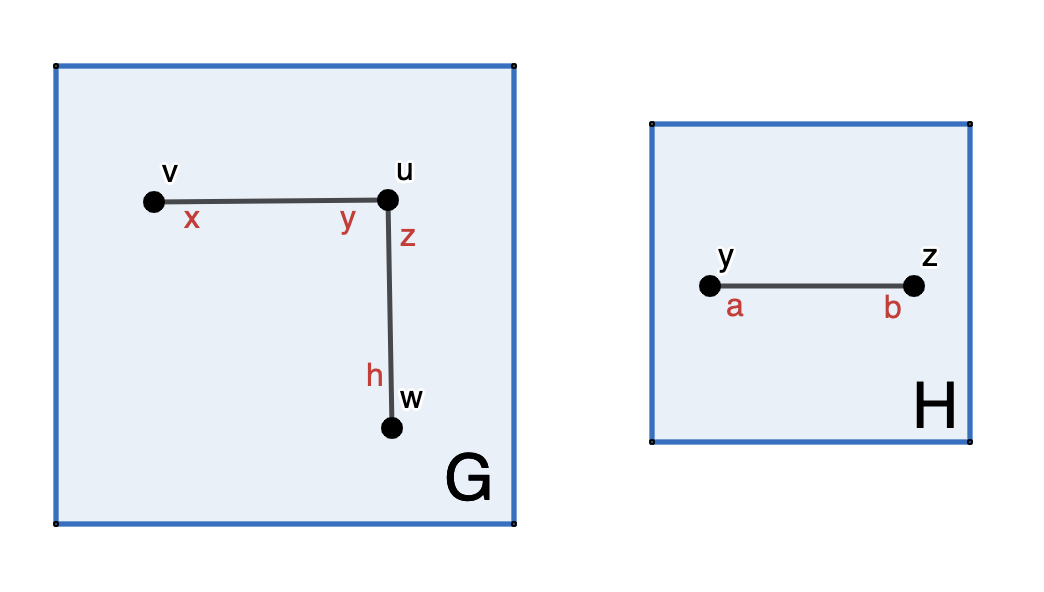}
\end{center}
\vspace*{8pt}
\label{desquare}
\caption{The figure on the left shows the edge $(v,(x,a))\in G\deranproduct H$. Variables in black correspond to vertex labels and red labels corresponds to edge indices.}
\end{figure}
Similar to the zig-zag product, the derandomized product operates separately on the connected components of $G$. The derandomized square $G\deranproduct H$ increases the connectivity of $G$ and increases the spectral gap. The following theorem by Rozenman and Vadhan gives an upper bound on $\lambda(G\deranproduct H)$:
\begin{theorem}[\textbf{Rozenman-Vadhan}]
If $G$ is an undirected $(N,D,\lambda)$-graph and $H$ is an undirected $(D,d,\alpha)$-graph, then $G\deranproduct H$ is an $(N,Dd,f(\lambda,\alpha))$-graph where
$$f(\lambda,\alpha)=1-(1-\lambda^2)(1-\alpha)\leq \lambda^2+\alpha$$
\end{theorem}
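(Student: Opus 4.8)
The plan is to pass to the normalized adjacency matrix of $G \deranproduct H$ and express it through a short product of structured matrices, in the spirit of the matrix-decomposition analyses of the zig-zag product. Write $M$ and $W_H$ for the normalized adjacency matrices of $G$ (size $N\times N$) and $H$ (size $D\times D$); let $P$ be the $ND\times ND$ permutation matrix of $\Rot_G$ acting on $\mathbb{R}^{[N]\times[D]}\cong\mathbb{R}^N\otimes\mathbb{R}^D$; let $A:=I_N\otimes W_H$ be the operator that performs one $H$-step on the edge-label coordinate while fixing the vertex coordinate; and let $B:=\tfrac{1}{\sqrt D}\tilde B$, where $\tilde B$ is the $N\times ND$ matrix summing out the label coordinate, so $B$ averages each cloud $\{v\}\times[D]$. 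First I would record the elementary identities $BB^T=I_N$, $B^TB=I_N\otimes\tfrac1D\mathbf{1}\mathbf{1}^T$, $P^2=I_{ND}$ (since for an undirected graph $\Rot_G$ is an involution, which I would check from the definition), and $BPB^T=M$; I would also verify directly from the definition of $\Rot_{G\deranproduct H}$ that it is an involution on $[N]\times[Dd]$, so $G\deranproduct H$ is a well-defined undirected $Dd$-regular graph. Tracing the walk that defines an edge of $G\deranproduct H$ — lift $v$ uniformly into its cloud, apply $\Rot_G$, take an $H$-step on the label, apply $\Rot_G$ again, project back to the vertex — yields the key identity $M'=B\,P\,A\,P\,B^T$ for the normalized adjacency matrix $M'$ of $G\deranproduct H$; deriving and sanity-checking this identity (e.g.\ that taking $H$ to be the complete graph with self-loops recovers $M'=M^2$, i.e.\ $G\deranproduct H=G^2$) is the first substantive step.

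Next I would use the expansion of $H$ at the level of quadratic forms rather than operator norms, which is what produces the sharp bound. Since $H$ is a $(D,d,\alpha)$-graph, every non-principal eigenvalue of $W_H$ lies in $[-\alpha,\alpha]$, so in the positive-semidefinite order $W_H\preceq\alpha I_D+(1-\alpha)\tfrac1D\mathbf{1}\mathbf{1}^T$ (both sides share the eigenvector $\mathbf{1}$ with eigenvalue $1$, and on $\mathbf{1}^\perp$ the right-hand side is $\alpha I$). Tensoring with $I_N$ gives $A\preceq\alpha I_{ND}+(1-\alpha)B^TB$, and since conjugation by a fixed matrix preserves the semidefinite order, $M'=(BP)\,A\,(BP)^T\preceq\alpha\,(BP)(BP)^T+(1-\alpha)(BP)B^TB(BP)^T=\alpha I_N+(1-\alpha)M^2$, using $P^2=I$, $BB^T=I_N$, and $BPB^T=M$.

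Finally, $M'$ is symmetric and fixes $\mathbf{1}_N$ (a quick check from $A\mathbf{1}=\mathbf{1}$, $P\mathbf{1}=\mathbf{1}$, and $B^T\mathbf{1}_N\propto\mathbf{1}_{ND}$), so its top eigenvalue is $1$ and, by Courant--Fischer, $\lambda(G\deranproduct H)=\max\{\,u^TM'u:\|u\|=1,\ u\perp\mathbf{1}_N\,\}$. For such $u$ the displayed inequality gives $u^TM'u\le\alpha+(1-\alpha)\,u^TM^2u=\alpha+(1-\alpha)\|Mu\|^2\le\alpha+(1-\alpha)\lambda(G)^2$, whence $\lambda(G\deranproduct H)\le\alpha+(1-\alpha)\lambda^2=1-(1-\lambda^2)(1-\alpha)=f(\lambda,\alpha)$, and $f(\lambda,\alpha)\le\lambda^2+\alpha$ because $(1-\lambda^2)(1-\alpha)\ge 1-\lambda^2-\alpha$. (The weaker bound $\lambda^2+\alpha$ also drops out more cheaply from the additive decomposition $M'=M^2+(BP)\,A^{\perp}\,(BP)^T$, where $A^{\perp}=I_N\otimes(W_H-\tfrac1D\mathbf{1}\mathbf{1}^T)$ has operator norm at most $\alpha$ and $BP$ has orthonormal rows.)

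I expect the main obstacle to be the first paragraph: pinning down the identity $M'=BPAPB^T$ with all normalizations correct and proving the structural facts about $B$ and $P$ — in particular that both $\Rot_G$ and $\Rot_{G\deranproduct H}$ are involutions — since everything afterward is a short semidefinite-order manipulation. A secondary point worth flagging is that the clean estimates $\|Mu\|\le\lambda(G)\|u\|$ on $\mathbf{1}^\perp$ (and $v^TW_Hv\le\alpha\|v\|^2$ there) use $\lambda(\cdot)$ in the sense of the largest non-principal eigenvalue in absolute value; if one reads ``second largest eigenvalue'' literally, one should additionally note the matching lower bound $M'\succeq-\alpha I_N+(1+\alpha)M^2$, which confines every non-principal eigenvalue of $M'$ to $[-\alpha,f(\lambda,\alpha)]$ and hence gives the $(N,Dd,f(\lambda,\alpha))$-graph conclusion under either convention.
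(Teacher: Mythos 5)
Your argument is correct, but note that the paper itself does not prove this statement: it simply cites Theorem 6.3 of Rozenman--Vadhan, so you are supplying the proof the paper omits. What you reconstruct is essentially the original Rozenman--Vadhan analysis, packaged slightly differently: they decompose the walk matrix of $H$ as $(1-\alpha)\frac{1}{D}\mathbf{1}\mathbf{1}^T+\alpha E$ with $\|E\|\leq 1$, so that the walk matrix of $G\deranproduct H$ becomes the convex combination $(1-\alpha)M^2+\alpha\,BPEPB^T$ and the bound $\alpha+(1-\alpha)\lambda^2=f(\lambda,\alpha)$ falls out by the triangle inequality on $\mathbf{1}^\perp$; your version replaces this additive decomposition by the semidefinite inequality $W_H\preceq\alpha I+(1-\alpha)\frac{1}{D}\mathbf{1}\mathbf{1}^T$ conjugated by $BP$, which is an equivalent route through the same key identity $M'=BPAPB^T$ (your verification of that identity, of $P^2=I$, $BPB^T=M$, and of the involution property of $\Rot_{G\deranproduct H}$ is exactly what is needed and is correct). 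One caveat on your closing remark: the hypothesis genuinely must be read with $\lambda$ and $\alpha$ bounding the non-principal eigenvalues in absolute value (as Rozenman--Vadhan do), since under the paper's literal ``second largest eigenvalue'' reading the statement is false --- for $G$ connected bipartite one has signed $\lambda(G)<1$, yet every edge of $G\deranproduct H$ stays within a side of the bipartition, so $G\deranproduct H$ is disconnected and its second eigenvalue is $1$. Your matching lower bound $M'\succeq-\alpha I+(1+\alpha)M^2$ correctly upgrades the conclusion to the absolute-value sense, but it does not rescue the signed reading of the hypothesis on $G$; it would be cleaner to state up front that $\lambda(\cdot)$ here means the second largest eigenvalue in absolute value.
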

\begin{proof}
See Theorem 6.3 in \cite{rozenman2005derandomized}.
\end{proof}
To prove $\USTCONN\in\Lspace$ in a manner similar to Theorem \ref{Reingold proof}, we require a family of undirected constant degree expander graphs to apply derandomized squaring with. In addition, these graphs need to be computed in log-space. This is possible by Reingold\cite{reingold2000entropy} and Gabber\cite{gabber1981explicit}
\begin{lemma}
For some constant $d=16^q$, there exists a family $\{X_m\}_{m\in \mathbb{N}}$ of undirected graphs where $X_m$ is an $(d^m,d,\frac{1}{100})$-graph. Furthermore, $\Rot_{X_m}$ can be computed in space $O(m)$.
\label{exist family}
\end{lemma}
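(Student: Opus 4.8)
The plan is to build $\{X_m\}$ by the iterated zig-zag construction of Reingold, Vadhan and Wigderson~\cite{reingold2000entropy}, using only the building blocks already set up in the paper. Let $H_0$ be the $((D_0)^{16},D_0,\tfrac12)$-graph of Proposition~\ref{exist D0 graph}; put $d:=(D_0)^{16}$ (a constant, which we may take to be a power of $16$, since the Cayley-graph construction behind Proposition~\ref{exist D0 graph} can be arranged so that $D_0$ is a power of $2$; alternatively one may draw the base expander from the explicit families of~\cite{gabber1981explicit}) and $H:=H_0^{\,4}$. By Proposition~\ref{power spectral gap}, $H$ is a $(d,\,d^{1/4},\,\tfrac1{16})$-graph, and note $d^{1/4}=D_0^{4}\in\mathbb N$ and $H$ lives on $[D_0^{16}]=[d]$ vertices. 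Let $X_1$ be the complete graph on $[d]$ with one self-loop added at every vertex, so $X_1$ is $d$-regular on $d$ vertices, $\Rot_{X_1}(u,i)=(i,u)$, and $\lambda(X_1)=0$; then define recursively
\[
  X_m \;:=\; \bigl(X_{m-1}\zigzag H\bigr)^{2}\qquad(m\ge 2).
\]
The constant-size maps $\Rot_{X_1}$ and $\Rot_H$ are hard-coded, using $O(1)$ space.

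First I would verify by induction on $m$ that the parameters are as claimed. If $X_{m-1}$ is $d$-regular on $d^{m-1}$ vertices, then since $H$ is $d^{1/4}$-regular on $[d]=[\deg X_{m-1}]$ the product $X_{m-1}\zigzag H$ is well defined, is $(d^{1/4})^{2}=d^{1/2}$-regular, and has $d^{m-1}\cdot d=d^{m}$ vertices; squaring restores the degree to $d$, so $X_m$ is $d$-regular on $d^{m}$ vertices. For the eigenvalue, the formula in Theorem~\ref{zigzag lower bound} gives $f(\lambda,\alpha)\le\lambda+\alpha$ for $\lambda,\alpha\ge 0$ (bound the first summand by $\tfrac12\lambda$, and by $\sqrt{x^2+y^2}\le x+y$ the second by $\tfrac12\lambda+\alpha$), whence, using Theorem~\ref{zigzag lower bound} with $\lambda(H)\le\tfrac1{16}$, the inductive hypothesis $\lambda(X_{m-1})\le\tfrac1{100}$, Proposition~\ref{power spectral gap}, and $\tfrac1{100}+\tfrac1{16}<\tfrac1{10}$,
\[
  \lambda(X_m)\;=\;\lambda\bigl(X_{m-1}\zigzag H\bigr)^{2}\;\le\;\Bigl(\lambda(X_{m-1})+\tfrac1{16}\Bigr)^{2}\;\le\;\Bigl(\tfrac1{100}+\tfrac1{16}\Bigr)^{2}\;<\;\tfrac1{100}.
\]
Since $\lambda(X_1)=0\le\tfrac1{100}$, induction gives $\lambda(X_m)\le\tfrac1{100}$ for all $m$, so each $X_m$ is a $(d^{m},d,\tfrac1{100})$-graph.

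It remains to compute $\Rot_{X_m}$ in space $O(m)$, and here I would reuse, essentially verbatim, the recursive in-place evaluation already used in the paper to show the Main Transform $\mathcal{T}$ is log-space computable (the algorithm $\mathcal{A_T}$). Unwinding the recursion, a vertex of $X_m$ is a string of $m$ symbols from $[d]$, which takes $O(m)$ bits; and by the definitions of graph powering and of the zig-zag product, evaluating $\Rot_{X_m}(v,i)$ reduces to a constant number of evaluations of $\Rot_{X_{m-1}}$ together with $O(1)$ look-ups in the hard-coded table for $\Rot_H$ and $O(1)$ in-place updates of constant-length edge labels. Carrying out the recursive calls one at a time, so that scratch space is reused, the algorithm maintains a single shared vertex label of length $O(m)$ plus a recursion stack of depth at most $m-1$ with $O(1)$ bits per frame, and the base case $\Rot_{X_1}$ (coordinate swap) uses $O(1)$ space; the total is $O(m)$, exactly as for $\mathcal{A_T}$.

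The one genuinely delicate point is the choice of building blocks in the first paragraph, and it is precisely the tension that the zig-zag machinery exists to resolve: powering is what shrinks $\lambda$ but it inflates the degree, whereas the zig-zag product is what deflates the degree but cannot shrink $\lambda$. Taking the inner graph $H$ to live on $d$ vertices while being only $d^{1/4}$-regular is what makes one zig-zag drop the degree from $d$ to $d^{1/2}$ and one squaring bring it back to $d$, so that the composite step $X_{m-1}\mapsto(X_{m-1}\zigzag H)^{2}$ preserves the degree exactly, multiplies the vertex count by exactly $d$ (so we land on $d^m$ for every $m$, with no need to interleave several families or treat residues of $m$ separately), and contracts the second eigenvalue by $x\mapsto(x+\tfrac1{16})^{2}$; and $\tfrac1{16}$ being small enough that $\tfrac1{100}$ is a fixed point of this map is exactly why the target $\tfrac1{100}$ is maintained. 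Once these blocks are in place, the two inductions and the space accounting are routine given the results quoted earlier.
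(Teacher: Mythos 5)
Your proposal is correct, and it is essentially the construction the paper delegates to its citation: the paper gives no proof of Lemma \ref{exist family}, pointing instead to Reingold--Vadhan--Wigderson \cite{reingold2000entropy} and \cite{gabber1981explicit}, and what you write out is precisely that iterated recursion $X_m=(X_{m-1}\zigzag H)^2$ with a constant-size inner expander, made self-contained using only the paper's own ingredients (Proposition \ref{exist D0 graph}, Proposition \ref{power spectral gap}, Theorem \ref{zigzag lower bound}, and the in-place recursive evaluation of $\mathcal{A_T}$). The parameter bookkeeping checks out: $H=H_0^4$ is a $(d,d^{1/4},\tfrac1{16})$-graph, one zig-zag plus one squaring returns the degree to $d$ and multiplies the vertex count by $d$, the bound $f(\lambda,\alpha)\le\lambda+\alpha$ follows from Theorem \ref{zigzag lower bound} as you say, and $(\tfrac1{100}+\tfrac1{16})^2<\tfrac1{100}$ closes the induction; the $O(m)$ space bound is the same accounting as for $\mathcal{A_T}$ with constant branching per level and depth $m$. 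The only loose end is the requirement $d=16^q$: Proposition \ref{exist D0 graph} as stated does not guarantee that $D_0$ is a power of two, so $d=D_0^{16}$ need not a priori be a power of $16$; you flag this and hedge by allowing a different base expander, which suffices, but in a final write-up you should fix one concrete base graph whose degree makes $d$ a power of $16$ (this matters because the Rozenman--Vadhan section powers the $16$-regular $G_{reg}$ by $q$ to reach degree $16^q=d$).
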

\begin{definition}
Let $X_m$ be the family of constant degree graphs in Lemma \ref{exist family}, let $m_0$ be some fixed constant, define
\begin{align*}
    H_m&= X_m,&\text{when }m\leq m_0\\
    H_m&= (X_{m_0-1+2^{m-m_0}})^{2^{m-m_0}}, &\text{when }m > m_0&
\end{align*}
$\Rot_{G_m}$ can be computed in space $O(m+2^{m-m_0})$.
\end{definition}
With the existence of such family of expanders, Rozenman and Vadhan takes a similar approach as Reingold to prove $\USTCONN\in\Lspace$. For any input graph $G$ over $[N]$, it can be transformed to a 16-regular graph $G_{reg}$, then by powering, it can be turned to a $d$-regular graph $G_0=(G_{reg})^{q}$, where $d=16^q$ is the constant in Lemma \ref{exist family}. Then he recursively defined $G_{m+1}=G_{m}\deranproduct H_{m}$. Taking $m_0$ to be $\ceil{100\log N}$ and $m_1=m_0+\log\log N+10$, Rozenman and Vadhan showed that $\lambda(G_{m_0}|_S)<\frac{3}{4}$ and $\lambda(G_{m_1}|_{S'})\leq \frac{1}{2N^3}$ for each connected component $S$ of $G_{m_0}$ and $S'$ of $G_{m_1}$. In addition, he showed that $G_{m_1}$ has degree $poly(N)$ and can be constructed in $O(\log N)$ space, i.e. $\Rot_{G_{m_1}}$ can be computed using $O(\log N)$ space. As all transformations in this procedure operates separately on each connected component, we can solve $\USTCONN$ of $G$ by running the algorithm of Lemma \ref{algorithm regular} on $G_{m_1}$.



\section{\texorpdfstring{$\SLspace=\Lspace$}{TEXT}}
The section will contain a proof of $\SLspace=\Lspace$. Let us first define the space $\SLspace$.

A Turing machine can be defined by the 7-tuple $(K, \Sigma, \Sigma_0, k, \Delta, s, F)$. Specifically, $K$ is a finite set of states, $\Sigma$ is the finite tape alphabet, $\Sigma_0 \subseteq \Sigma$ is the input alphabet, $k > 0$ is the number of tapes, $s \in K$ is the initial state, $F \subseteq K$ is the set of final states, and $\Delta$ is the finite set of transitions.

Using this definition, we define transitions for the Turing machine with form $(p, (ab, D, cd), q)$ where $a, b, c, d \in \Sigma$ and $D \in {1, -1}$. A transition of the form $(p, ab, 1, cd, q)$ means that if a Turing machine in state $p$, scans $a$ and symbol $b$ is contained in the square to the right of the scanned square, the Turing machine moves the tape head one square to the right, rewrite the squares with symbol $a$ and symbol $b$ with symbol $c$ and symbol $d$, respectively, and changes to state $q$. On the other hand, a transition of the form $(p, ab, -1, cd, q)$ means that if a Turing machine in state $p$ scans symbol $b$ and symbol $a$ is contained in the square to the left of the scanned square, the Turing machine moves the tape head one square to the left, rewrite the square with symbol $b$ and symbol $a$ to symbol $d$ and symbol $c$, respectively, and changes to state $q$. 

For Turing machines with multiple tapes, we define the transition form $\delta = (p, t_1, t_2, \cdots, t_k, q)$, where $k$ is the number of tapes. Each $t_i$ is a 3-tuple $(ab, D, cd)$, which specifies the transition as described above for tape $i$. 

For a non-deterministic Turing machine, there can be multiple transitions from each possible configuration. For each of these possible choices, the non-deterministic Turing machine creates a branch in its configuration path. A non-deterministic Turing machine accepts a configuration if any of the branches within its configuration path ends at an accepting state.

We note that our definition of Turing machine is equivalent to that of a standard Turing machine. Our "peeking" Turing machine can be reduced to the big-headed Turing machine as defined by Hennie \cite{hennie_1979}, which has been shown to be equivalent to a standard Turing machine.

Now, let us define symmetrical Turing machines using the definition by Lewis and Papadimitriou in "Symmetric Space-Bounded Computation" \cite{LEWIS1982161}.

\begin{definition}
For a transition $\delta = (p, t_1, t_2, \cdots, t_k, q)$, with $t_i=(a_ib_i,D_i,c_id_i)$, we define its inverse $\delta^{-1} = (q, t_1^{-1}, t_2^{-1}, \cdots, t_k^{-1}, p)$, where $t_i^{-1} = (c_id_i, -D_i, a_ib_i)$. 
\end{definition}
\begin{definition}
A $\textbf{Symmetrical Turing Machine}$ is a non-deterministic Turing machine whose transition functions $\Delta$ is invariant under taking inverse, namely, for every non-deterministic transition $\sum p_i\delta_i \in \Delta$, we have $\sum p_i\delta_i^{-1} \in \Delta$.
\end{definition}

\begin{definition}
The space $\SLspace$ is the set of all languages which can be determined by a symmetrical log-space Turing machine.
\end{definition}

We note that a symmetric Turing machine has a number of special transitions, from which it is always possible to revert from these transitions (since the symmetric Turing machine includes the inverse of these transitions). 

To prove that $\USTCONN$ is $\SLspace$-complete, we shall use a lemma from the paper Symmetric Space-Bounded Computation by Lewis and Papadimitriou\cite{LEWIS1982161}. We begin by defining relevant terms in the lemma. 

\begin{definition}
If there exists a transition from configuration $C_1$ to $C_2$, we write $C_1 \vdash_M C_2$ or equivalently $C_2 \dashv_M C_1$.
\end{definition}

Let us define the reflexive, transitive closure of $\vdash$, denoted $\vdash^*_M$ and the transitive closure of $\vdash$, denoted  $\vdash^+_M$. For any  $M$, let $\mathcal{A}$ be a subset of all possible configurations on $M$. If for some possible configurations $C_0,...,C_n$ of $M$, we have $C_0 \vdash_M C_1 \vdash_M \cdots \vdash_M C_n$ for some $n \geq 0$ and $C_1, C_2, C_3, \cdots C_n \not \in \mathcal{A}$, we write $C_0 \vdash_M^{*\mathcal{A}} C_n$ (equivalently $C_n \dashv_M^{*\mathcal{A}} C_0$). Note that if $C_0 \in \mathcal{A}$ and $C_0 \vdash_M C_0$, we have $C_0 \vdash_M^{*\mathcal{A}} C_0$. If $A_1 \vdash_M^{*\mathcal{A}} B \vdash_M A_2$ for $A_1, A_2 \in \mathcal{A}$ and $B$ a possible configuration of $M$, we write $A_1 \vdash_M^{+\mathcal{A}} A_2$ (equivalently $A_2 \dashv_M^{+\mathcal{A}} A_1$).

For a Turing machine $M$, define the Turing machine $M*$, which is the same as $M$ except that one can't re-enter its initial state, leave its final state nor write blanks on its tapes. We also define $\overline{M}*$ as the symmetrically closed $M*$, i.e. translations of $\overline{M}*$ is the union of the transition of $M*$ and its inverse.

\begin{lemma}
For a non-deterministic Turing machine $M = (K, \Sigma, \Sigma_0, k, \Delta, s, F)$, let $\mathcal{A}$ be a subset of all possible configurations of $M$. If the following conditions hold:
\begin{enumerate}[label=(\alph*)]
    \item For any $A_1, A_2 \in \mathcal{A}$, if $A_1 \vdash_M^{+\mathcal{A}} A_2$, then $A_2 \vdash_M^{+\mathcal{A}} A_1$.
    \item For any $A$ in the union of $\mathcal{A}$ and possible initial configurations of $M$, any $B \not \in \mathcal{A}$, and any $C_1, C_2, C_3$, if $A \vdash_M^{*\mathcal{A}} C_1 \dashv_M^{*\mathcal{A}} C_2 \dashv_M B \vdash_M C_3$, then $C_2 = C_3$.
    \item For any $A_1$ in the union of $\mathcal{A}$ and possible initial configurations of $M$, any $A_2 \in \mathcal{A}$, and any $B$, if $A_1 \vdash_M^{*\mathcal{A}} B \dashv_M^{*\mathcal{A}} A_2$, then $A_1 = A_2$.
\end{enumerate}
Then, the symmetrical non-deterministic Turing machine $\overline{M}*$ would accept the same language as $M$ in the same space as $M$. 
\label{symmetric lemma}
\end{lemma}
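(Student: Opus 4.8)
The plan is to prove Lemma \ref{symmetric lemma} by establishing a tight correspondence between accepting computations of $M$ and accepting computations of $\overline{M}*$. Since $\overline{M}*$ contains all transitions of $M*$ (which in turn are a subset of $M$'s, merely forbidding re-entry to the start state, departure from a final state, and writing blanks), every accepting computation of $M*$ is immediately an accepting computation of $\overline{M}*$; and an accepting computation of $M$ can be assumed, without loss of generality, to be one of $M*$ (a machine never needs to revisit its start state, leave a final state, or write a blank during an accepting run, and space usage is unchanged). So the containment $L(M) \subseteq L(\overline{M}*)$ is the easy direction. The content of the lemma is the reverse containment: any accepting computation of $\overline{M}*$ — which may freely use inverse transitions and thus wander "backwards" — can be converted into an accepting computation of $M*$ using no more space.

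\medskip
First I would set up the combinatorial framework. The role of the set $\mathcal{A}$ is to serve as a set of distinguished "checkpoint" configurations; intuitively $\mathcal{A}$ should be thought of as the configurations in which $M$ makes a genuine nondeterministic choice, with the stretches $A_1 \vdash_M^{+\mathcal{A}} A_2$ between consecutive checkpoints being deterministic forward segments. I would analyze an arbitrary accepting run of $\overline{M}*$ as a walk on the "configuration graph" whose vertices are configurations and whose edges are the symmetric relation $\vdash_M \cup \dashv_M$, and decompose this walk at its visits to $\mathcal{A}$. Between two consecutive $\mathcal{A}$-visits the walk is a sequence of forward ($\vdash_M$) and backward ($\dashv_M$) steps through configurations outside $\mathcal{A}$. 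Conditions (b) and (c) are precisely the local "confluence" properties one needs: (b) says that once you are off a checkpoint, the forward step is determined (so a backward step followed by a forward step returns you where you were — you cannot "switch branches" except at a checkpoint), and (c) says the segments between checkpoints cannot form a nontrivial backward-then-forward loop that returns to $\mathcal{A}$ at a different point. Together they let me argue that any "zig-zag" subwalk that leaves a checkpoint $A_1$, wanders, and returns to a checkpoint, is either trivial or can be cut out / straightened, so that the reduced walk from checkpoint to checkpoint is a genuine forward $M$-segment $A_1 \vdash_M^{+\mathcal{A}} A_2$.

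\medskip
The key steps, in order, would be: (1) reduce an accepting run of $\overline{M}*$ to a walk on the symmetric configuration graph from the initial configuration to an accepting configuration; (2) prove a \emph{cancellation lemma}: any maximal subwalk lying entirely outside $\mathcal{A}$ except at its two endpoints, using both forward and backward edges, can be replaced by a walk that is monotone — either all-forward or all-backward — using conditions (b) and (c); this is where the "no blanks / no re-entering start / no leaving final" restrictions of the $*$ construction are used to rule out degenerate wraparound; (3) having straightened each inter-checkpoint segment, obtain a walk on the \emph{checkpoint graph} with vertex set $\mathcal{A}$ (plus the initial configuration) and edges $A_1 \vdash_M^{+\mathcal{A}} A_2$, where by condition (a) this edge relation is symmetric; (4) on this checkpoint graph, a walk from start to an accept vertex implies a \emph{simple} path from start to accept, and by condition (a) each edge of that simple path can be traversed in the forward $M$-direction, yielding an honest accepting computation of $M$; (5) observe that none of these surgeries increases the tape contents ever used, so the space bound is preserved, giving $L(\overline{M}*) \subseteq L(M)$ within the same space.

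\medskip
The hard part will be step (2), the cancellation lemma — making precise the claim that a forward step can always undo a preceding backward step (and vice versa) \emph{even in the presence of the peeking/rewriting transitions}, and handling the boundary cases where the subwalk touches the initial configuration or an accepting configuration, where conditions (b) and (c) are stated with "$A$ in the union of $\mathcal{A}$ and possible initial configurations" precisely to cover these. I expect the bookkeeping to require careful case analysis on whether a given $\vdash_M$ / $\dashv_M$ edge in the walk corresponds to a transition of $M*$ or to the formal inverse added in $\overline{M}*$, since the definition of $\overline{M}*$ as a symmetrical machine means the same graph edge may arise in two ways. Once the cancellation lemma is in hand, steps (3)–(5) are essentially the standard "a walk contains a simple path" argument together with the symmetry hypothesis (a), and the space accounting is immediate because all operations only delete steps from, or locally reroute, an existing computation without introducing new tape symbols.
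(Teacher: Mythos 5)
The paper does not actually prove Lemma \ref{symmetric lemma}: its ``proof'' is a citation to Lemma 1 of Lewis and Papadimitriou \cite{LEWIS1982161}, so there is no in-paper argument to compare yours against. Your outline is, in spirit, a faithful reconstruction of the strategy of that original proof: regard an accepting run of $\overline{M}*$ as a walk on the undirected configuration graph, cut it at its visits to $\mathcal{A}$, use (b) and (c) to replace each inter-checkpoint zig-zag by a monotone segment, and use (a) to re-orient any segment traversed backwards, obtaining a forward accepting computation of $M$; the containment $L(M)\subseteq L(\overline{M}*)$ and the space accounting are then routine. So the route is the right one.

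As a proof, however, your proposal stops exactly where the work begins. Step (2), the cancellation lemma, is the entire mathematical content of the statement, and you only announce it; the natural way to carry it out (and essentially what Lewis and Papadimitriou do) is an induction on the number of inverse transitions in the $\overline{M}*$ computation, eliminating an inverse step by appealing to the unique-successor property that (b) supplies for non-$\mathcal{A}$ configurations lying on a track and to the no-merging property that (c) supplies for tracks emanating from distinct checkpoints, with the boundary case of the initial configuration handled by the fact that (b) and (c) quantify over $\mathcal{A}$ together with initial configurations. Two further points need repair. First, the ``easy'' direction is not a pure WLOG: $M*$ removes transitions of $M$ (re-entering the start state, writing blanks), and for the given $M$ you cannot simply assume accepting computations avoid these; you must either show the hypotheses force it or pass to a normal form of $M$ and verify that (a)--(c) survive that change. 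Second, accepting configurations need not belong to $\mathcal{A}$, so the checkpoint-graph argument of steps (3)--(4) must also straighten the final segment from the last $\mathcal{A}$-visit to the accepting configuration (and handle runs that never visit $\mathcal{A}$ at all); this is the same argument, but it has to be said.
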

\begin{proof}
This is Lemma 1 in \cite{LEWIS1982161}.
\end{proof}

Using this lemma, we can now prove that $\USTCONN$ is $\SLspace$-complete.
\begin{theorem}
$\USTCONN$ is $\SLspace$-complete.
\label{SL complete}
\end{theorem}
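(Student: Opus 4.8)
The plan is to establish the two halves of completeness separately: membership $\USTCONN \in \SLspace$, and $\SLspace$-hardness of $\USTCONN$ under log-space many-one reductions.

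For membership I would describe the obvious reachability walk and then certify its symmetry using Lemma \ref{symmetric lemma}. Let $M$ be a nondeterministic machine that, on input $(G,s,t)$, keeps a current vertex on its work tape (initially $s$) and repeatedly does the following: nondeterministically guess an incident edge label $i \in [\deg]$, then \emph{deterministically} scan the input representation of $G$ to compute the neighbour $w$ of the current vertex along edge $i$, overwrite the current vertex by $w$, and accept if $w=t$. This runs in $O(\log N)$ space. To apply Lemma \ref{symmetric lemma}, take $\mathcal{A}$ to be the set of configurations of $M$ in which the work tape holds exactly a vertex name and the machine is poised to read the next edge label. Between two consecutive members of $\mathcal{A}$ the computation is just the edge-lookup, which is deterministic both forwards and backwards, so conditions (b) and (c) hold; condition (a) is exactly the undirectedness of $G$ — a one-step move from vertex $u$ to vertex $v$ can be reversed along the same edge. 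Hence $\overline{M}*$ is a symmetric log-space machine deciding $\USTCONN$.

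For hardness, let $L \in \SLspace$ be decided by a symmetric log-space machine $M'$, and fix an input $x$ of length $n$. The configurations of $M'$ on $x$ have size $O(\log n)$, hence there are $\mathrm{poly}(n)$ of them; build the undirected graph $G_{M',x}$ whose vertices are these configurations together with one extra vertex $t$, with an edge $\{C_1,C_2\}$ whenever $C_1 \vdash_{M'} C_2$, and an edge $\{C,t\}$ for every accepting configuration $C$. Because $\Delta$ is invariant under taking inverses, $C_1 \vdash_{M'} C_2$ implies $C_2 \vdash_{M'} C_1$, so the first family of edges really is a symmetric relation and $G_{M',x}$ is a genuine undirected graph; since forward-reachability under a symmetric relation is the same as connectivity, $M'$ accepts $x$ iff its start configuration $C_{\mathrm{start}}(x)$ is connected to $t$ in $G_{M',x}$. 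The map $x \mapsto (G_{M',x}, C_{\mathrm{start}}(x), t)$ is computable in $O(\log n)$ space: a deterministic machine enumerates all configuration strings of length $O(\log n)$ in order and, for each ordered pair, tests the transition relation of $M'$, which inspects only a constant window of each configuration plus the fixed description of $\Delta$. Thus $L$ log-space reduces to $\USTCONN$, and as $L$ was arbitrary, $\USTCONN$ is $\SLspace$-hard; with the membership half this gives $\SLspace$-completeness (and, via Theorem \ref{Reingold proof}, the promised collapse $\SLspace = \Lspace$).

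I expect the membership direction to be the real work. The walk algorithm itself is trivial, but verifying hypotheses (a)--(c) of Lemma \ref{symmetric lemma} forces $M$ to be engineered so that all nondeterministic branching occurs precisely at the "vertex" configurations in $\mathcal{A}$ and the stretches of computation between them are deterministic in both time directions; a carelessly written walk machine — one that leaves residual nondeterminism in the edge-lookup phase, or that can re-enter some configuration by two different routes — will violate condition (b) or (c). By contrast the hardness direction is essentially bookkeeping, the only genuinely load-bearing observation being that symmetry of $\Delta$, via the paper's definition of $\delta^{-1}$, makes the configuration graph undirected.
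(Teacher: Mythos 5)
Your proposal is correct and follows essentially the same route as the paper: membership via a nondeterministic walk machine whose symmetry is certified by Lemma \ref{symmetric lemma} with $\mathcal{A}$ taken to be the ``poised at a vertex'' configurations, and hardness via the undirected configuration graph of a symmetric log-space machine. Your version is slightly more careful on the hardness side (the extra sink vertex $t$ joined to all accepting configurations makes the many-one reduction to a single target explicit), but this is a refinement of the paper's argument rather than a different one.
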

\begin{proof}
We begin by proving that $\USTCONN \in \NL$. In other words, we describe a non-deterministic Turing machine that can solve $\USTCONN$.

Let us define a non-deterministic Turing machine $M$ with 2 tapes. Given an undirected graph $G$ and nodes $s,t\in G$, $M$ begins by writing $s$ and $t$ on its two tapes. Let the tape containing $t$ be the tape containing the destination node and the tape containing $s$ be the tape containing the current node. At the start of each step, we non-deterministically choose a neighbor of the current node, rewrite the neighbor into the tape containing the current node, and check if the node in the tape containing the current node is the same as the destination node. If the current node is the same as the destination node, $M$ accepts, else $M$ continues the process. For our non-deterministic process of choosing a neighbor, we move through the edges from left to right. For each edge, we check if the edge contains the current node. If it does, with probability $\frac{1}{2}$, we update the current node by the other node in the edge. We move from the leftmost edge to the rightmost edge in the input tape to maintain a constant order in choosing neighbors of the current node.

Since both of our tapes only store $1$ node, it is clear that our non-deterministic Turing machine run in log-space. 

Now, we shall show that our non-deterministic Turing machine satisfies the conditions in Lemma \ref{symmetric lemma}. Let $\mathcal{A}$ be the configuration in the Turing machine where the tapes contain $t$ and the current node and the tape head reading the inputs is at the start of an edge (about to choose a neighbor of the current node). 

Now, let us consider condition $(a)$ of the Lemma. For any $A_1, A_2 \in \mathcal{A}$, where $A_1 \vdash_M^{+\mathcal{A}} A_2$, let the current node in configuration $A_1$ be $c_1$ and let the current node in configuration $A_1$ be $c_2$. Since we have $A_1 \vdash_M^{+\mathcal{A}} A_2$, we know that $c_1$ and $c_2$ must be neighbors (since we wouldn't go through another configuration in $\mathcal{A}$ before we arrive at the configuration $A_2$, i.e. we would not be choosing any other node to get to $c_2$). Furthermore, since all the edges in graph $G$ are undirected, it is clear that we can go back from $c_2$ to $c_1$ with the same path. Thus, we have $A_2 \vdash_M^{+\mathcal{A}} A_1$.

Next, let us consider condition $(b)$ of the Lemma. Take any $A$ in the union of $\mathcal{A}$ and possible initial configurations of $M$, $B \not \in \mathcal{A}$, and $C_1, C_2, C_3$, such that $A \vdash_M^{*\mathcal{A}} C_1 \dashv_M^{*\mathcal{A}} C_2 \dashv_M B \vdash_M C_3$. From $A \vdash_M^{*\mathcal{A}} C_1 \dashv_M^{*\mathcal{A}} C_2 \dashv_M B \vdash_M C_3$, it is clear that $B$ is either $A$ or $\not \in \mathcal{A}$. Thus, $C_1, C_2, C_3 \not \in \mathcal{A}$, and are configurations of $M$ when non-deterministically choosing the neighbors of the current node in configuration $A$. Since we always choose neighbors by checking from the leftmost edge to the rightmost, there is only one possible linear process to non-deterministically choose the neighbors of the current node, i.e. for any configuration while choosing neighbors of the current node, $M$ can only be coming from one possible configuration and can only transition to one possible configuration. Thus, it is clear that $C_2 = C_3$.

Finally, let us consider condition $(c)$ of the Lemma. Take any $A_1$ in the union of $\mathcal{A}$ and possible initial configurations of $M$, $A_2 \in \mathcal{A}$, and any $B$, such that $A_1 \vdash_M^{*\mathcal{A}} B \dashv_M^{*\mathcal{A}} A_2$. Let the current node in $A_1$ be $c_1$. It is clear that $B$ is a configuration of $M$ while $M$ is choosing the neighbors for $c_1$. It is also clear that to return to another configuration with a current node that is not $c_1$, $M$ must first return to configuration $A_1$. Thus, $A_1 = A_2$. 

Since $M$ satisfied all three condisions of Lemma \ref{symmetric lemma}, by the lemma, the symmetrical Turing machine $\overline{M}^*$ determines $\USTCONN$ in log-space. So $\USTCONN\in\SLspace$.

Using Savitch's argument in \cite{SAVITCH1970177} and noting that the graph generated from a Symmetric Machine is undirected (since we can revert from any transition), we have that $\USTCONN$ is $\SLspace$-complete.

More specifically, consider any problem in $\SLspace$ which is solved by the symmetric Turing machine $M$. Since $M$ can be computed in log-space, there are polynomially many states for $M$. We can construct a graph $G$, where the nodes are the possible states of $M$, and the edges are transitions between the possible states. Since $M$ is a symmetric Turing machine, we note that the edges are undirected. Thus, solving the problem using $M$ would be equivalent to checking if there's a path that connects the node of a starting state to the node of an accepting state. Thus, we have shown that all problems in $\SLspace$ can be reducible to $\USTCONN$ and $\USTCONN$ is $\SLspace$-complete.
\end{proof}

\begin{theorem}
$\SLspace=\Lspace$
\end{theorem}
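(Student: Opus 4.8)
The plan is to prove the two inclusions $\SLspace \subseteq \Lspace$ and $\Lspace \subseteq \SLspace$ separately, and the first is where the machinery of the previous sections pays off. By Theorem \ref{SL complete}, $\USTCONN$ is $\SLspace$-complete, so every language $L \in \SLspace$ admits a log-space many-one reduction $f$ to $\USTCONN$; and by Theorem \ref{Reingold proof}, $\USTCONN$ itself is decided by an $O(\log N)$-space algorithm $\mathcal{A}_{con}$. Composing $f$ with $\mathcal{A}_{con}$ then decides $L$, and the whole point is to carry out this composition within logarithmic space.

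First I would make the composition precise, since it is the only genuinely delicate point. One cannot simply compute $f(x)$, write it to a tape, and feed it to $\mathcal{A}_{con}$, because $f(x)$ may be polynomially long and will not fit in logarithmic space. The standard remedy is the log-space composition lemma: whenever $\mathcal{A}_{con}$ requests the $j$-th symbol of its (virtual) input $f(x)$, we re-run the transducer computing $f$ from scratch on $x$, counting emitted output symbols until the $j$-th appears, return that symbol, and discard all intermediate work. This keeps only the pointer $j$ (which is $O(\log|x|)$ bits, since $|f(x)| = \mathrm{poly}(|x|)$) together with the two machines' work tapes, so the simulation runs in $O(\log|x|)$ space. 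Hence $L \in \Lspace$, which gives $\SLspace \subseteq \Lspace$.

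For the reverse inclusion $\Lspace \subseteq \SLspace$, I would argue that deterministic log-space computation is a degenerate case of symmetric computation. Given $L \in \Lspace$ decided by a deterministic log-space machine $M$ that always halts, on input $x$ the configuration graph of $M$ has out-degree at most one, so the forward trajectory $c_0 \vdash_M c_1 \vdash_M c_2 \vdash_M \cdots$ from the start configuration $c_0$ is a simple path terminating in an accepting or a rejecting configuration. Consequently the undirected connected component of $c_0$ contains no cycle (a cycle would contradict halting) and is an in-tree whose unique sink is exactly that terminal configuration, so $x \in L$ if and only if $c_{\mathrm{accept}}$ lies in the undirected component of $c_0$. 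This is an instance of $\USTCONN$ on a graph that is log-space constructible from $x$, and such instances are in $\SLspace$ by Theorem \ref{SL complete}; alternatively, one checks directly that symmetrically closing $M$'s transitions (the $\overline{M}*$ construction) introduces no new accepting paths, which is the route taken by Lewis and Papadimitriou \cite{LEWIS1982161}. Either way $L \in \SLspace$.

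Combining the two inclusions yields $\SLspace = \Lspace$. I expect the log-space composition argument in the first inclusion to be the main obstacle to state cleanly — it is routine but must be handled with care, since the naive "evaluate the reduction and store its output" approach violates the space bound; everything else is bookkeeping, and the second inclusion is essentially immediate once the configuration-graph picture is in place.
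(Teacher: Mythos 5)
Your proposal is correct and follows essentially the same route as the paper, which simply combines Theorem \ref{SL complete} ($\USTCONN$ is $\SLspace$-complete) with Theorem \ref{Reingold proof} ($\USTCONN\in\Lspace$). You additionally spell out the log-space composition of the reduction with $\mathcal{A}_{con}$ and the easy inclusion $\Lspace\subseteq\SLspace$, details the paper leaves implicit but which are standard and correctly handled in your argument.
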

\begin{proof}
This is a direct consequence of Theorem \ref{Reingold proof} and Theorem \ref{SL complete}.
\end{proof}

\section{Discussion}
In this section, we will discuss the importance of the paper by Reingold\cite{reingold2008undirected} and some further research based on the paper.

The paper by Reingold has made progress towards discovering the relationship between $\Lspace$ and $\RLspace$. Let us begin by defining $\RLspace$.
\begin{definition}
$\RLspace$ is the space of all languages $L$ such that there exists a randomized Turing machine $T$ which runs in log-space and polynomial time and satisfies
$$P(T\text{ accepts }x)>\frac{2}{3}, \quad x\in L$$
$$P(T\text{ rejects }x)=1, \quad x\not\in L$$
\end{definition}
Notice that we can choose any constant $0<c<1$ replacing $\frac{2}{3}$. We can also increase the probability of accepting $x$ when $x\in L$ to $1-2^{-poly(|x|)}$ by repeating the algorithm $poly(|x|)$ times.



It has been shown by Aleliunas et al, in 1979 that $\STCONN \in \RLspace$ \cite{aleliunas1979randomized} In particular, $\USTCONN$, as a specific case of $\STCONN$ is contained in $\RLspace$. Furthermore, it has been shown that random walks can be generated by a randomized Turing machine of $\RLspace$ in polynomial time and log space. Thus, Reingold's proof that $\USTCONN \in \Lspace$ has brought forth major areas of research into the properties of $\RLspace$.

Building upon this research, Reingold, Trevisan, and Vadhan has shown in 2005\cite{Reingold2005} that a subset of $\STCONN$ ($\STCONN$ for graphs whose random walks are of polynomial mixing time) is $\RLspace$ complete and a deterministic log-space Turing Machine can be used to simulate random walks for biregular graphs.

Some areas of future research into the relationship between $\Lspace$ and $\RLspace$ could be to investigate whether one can describe a deterministic log-space Turing Machine that can simulate random walks for all regular directed graphs. 

\pagebreak
\bibliography{main}
\bibliographystyle{alpha}

\end{document}